
\documentclass[]{imaiai}


\NeedsTeXFormat{LaTeX2e}
\usepackage{amssymb}
\usepackage{multirow}
\usepackage{graphicx}
\usepackage[caption=false]{subfig} 
\usepackage{pifont}
\usepackage{navigator} 
\usepackage{tabularx} 

\newcommand{\cmark}{\ding{51}}%
\newcommand{\xmark}{\ding{55}}%
\DeclareMathOperator{\Tr}{Tr} 

\usepackage{amsmath} 
\usepackage{amsthm} 
\usepackage{url} 



\begin{document}

\title{Measuring Partial Balance in Signed Networks}

\shorttitle{Measuring Partial Balance in Signed Networks} 
\shortauthorlist{S. Aref and M.C. Wilson} 

\author{{
\sc Samin Aref}$^*$
{\sc and}
{\sc Mark C. Wilson}\\[2pt]
Department of Computer Science, University of Auckland\\
Auckland, Private Bag 92019, New Zealand\\
$^*${\email{sare618@aucklanduni.ac.nz}}
}

\maketitle

\begin{abstract}
{Is the enemy of an enemy necessarily a friend? If not, to what extent does this tend to hold? Such questions were formulated in terms of signed (social) networks and necessary and sufficient conditions for a network to be ``balanced" were obtained around 1960. Since then the idea that signed networks tend over time to become more balanced has been widely used in several application areas. However, investigation of this hypothesis has been complicated by the lack of a standard measure of partial balance, since complete balance is almost never achieved in practice. We formalize the concept of a measure of partial balance, discuss various measures, compare the measures on synthetic datasets, and investigate their axiomatic properties. The synthetic data involves Erd\H{o}s-R\'{e}nyi and specially structured random graphs. We show that some measures behave better than others in terms of axioms and ability to differentiate between graphs. We also use well-known datasets from the sociology and biology literature, such as Read's New Guinean tribes, gene regulatory networks related to two organisms, and a network involving senate bill co-sponsorship. Our results show that substantially different levels of partial balance is observed under cycle-based, eigenvalue-based, and frustration-based measures. We make some recommendations for measures to be used in future work.}
{structural analysis, signed networks, balance theory, axiom, frustration index, algebraic conflict}
\\
2000 Math Subject Classification: 05C22, 05C38, 91D30, 90B10
\footnotetext{The reference to this article should be made as follows: {\scshape Aref, S., and Wilson, M.~C.}
	\newblock Measuring partial balance in signed networks.
	\newblock {\em Journal of Complex Networks 6}, 4 (2018), 566--595.
	\newblock doi:10.1093/comnet/cnx044.}
\end{abstract}

\section{Introduction} \label{s:intro}

Transitivity of relationships has a pivotal role in analyzing social interactions. Is the enemy of an enemy a friend? What about the friend of an enemy or the enemy of a friend? Network science is a key instrument in the quantitative analysis of such questions. Researchers in the field are interested in knowing the extent of transitivity of ties and its impact on the global structure and dynamics in communities with positive and negative relationships. Whether the application involves international relationships among states, friendships and enmities between people, or ties of trust and distrust formed among shareholders, relationship to a third entity tends to be influenced by immediate ties.

There is a growing body of literature that aims to connect theories of social structure with network science tools and techniques to study local behaviors and global structures in signed graphs that come up naturally in many unrelated areas. The building block of structural balance is a work by Heider \cite{heider_social_1944} that was expanded into a set of graph-theoretic concepts by Cartwright and Harary \cite{cartwright_structural_1956} to handle a social psychology problem a decade later. The relationship under study has an antonym or dual to be expressed by the opposite sign \cite{harary_structural_1957}. In a setting where the opposite of a negative relationship is a positive relationship, a tie to a distant neighbor can be expressed by the product of signs reaching him. Cycles containing an odd number of negative edges are considered to be unbalanced, guaranteeing total balance therefore only in networks containing no such cycles. This strict condition makes it quite unlikely for a signed network to be totally balanced. The literature on signed networks suggests many different formulae to measure balance. These measures are useful for detecting total balance and imbalance, but for intermediate cases their performance is not clear and has not been systematically studied.

\subsection*{\textbf{Our contribution}}
The main focus of this paper is to provide insight into measuring partial balance, as much uncertainty still exists on this. The dynamics leading to specific global structures in signed networks remain speculative even after studies with fine-grained approaches. The central thesis of this paper is that not all measures are equally useful. We provide a numerical comparison of several measures of partial balance on a variety of undirected signed networks, both randomly generated and inferred from well-known datasets. Using theoretical results for simple classes of graphs, we suggest an axiomatic framework to evaluate these measures and shed light on the methodological details involved in using such measures.

This paper begins by laying out the theoretical dimensions of the research in Section~\ref{s:problem} and looks at basic definitions and terminology. In Section~\ref{s:check} different means of checking for total balance are outlined. Section~\ref{s:measure} discusses some approaches to measuring partial balance in Eq. \eqref{eq1.5} -- \eqref{eq5.1}, categorized into three families of measures \ref{ss:familyc} -- \ref{ss:familyf} and summarized in Table~\ref{tab1}. Numerical results on synthetic data are provided in Figures \ref{fig1} -- \ref{fig2} in Section~\ref{s:basic}. Section~\ref{s:special} is concerned with analytical results on synthetic data in closed-form formulae in Table~\ref{tab2} and visually represented in Figures \ref{fig3} -- \ref{fig5}. Axioms and desirable properties are suggested in Section~\ref{s:axiom} to evaluate the measures systematically. Section~\ref{s:recom} concerns recommendations for choosing a measure of balance. Numerical results on real signed networks are presented in Section~\ref{s:real}. Finally, Section~\ref{s:conclu} summarizes the study and provides direction for future research. 

\section{Problem statement and notation} \label{s:problem}
Throughout this paper, the terms signed graph and signed network will be used interchangeably to refer to a graph with positive and negative edges. We use the term cycle only as a shorthand for referring to simple cycles of the graph. While several definitions of the concept of balance have been suggested, this paper will only use the definition for undirected signed graphs unless explicitly stated.

We consider an undirected signed network $G = (V,E,\sigma)$ where $V$ and $E$ are the sets of vertices and edges, and $\sigma$ is the sign function 
$\sigma: E\rightarrow\{-1,+1\}$. The set of nodes is denoted by $V$, with $|V| = n$. The set of edges is represented by $E$ including $m^-$ negative edges and $m^+$ positive edges adding up to a total of $m=m^+ + m^-$ edges. The symmetric \emph{signed adjacency matrix} and the \emph{unsigned adjacency matrix} are denoted by \textbf{A} and ${|\textbf{A}|}$ respectively. Their entries are defined in \eqref{eq1} and \eqref{eq1.1}. 
\begin{align}\label{eq1}
a{_u}{_v} =
\left\{
\begin{array}{ll}
\sigma_{(u,v)} & \mbox{if } {(u,v)}\in E \\
0 & \mbox{if } {(u,v)}\notin E
\end{array}
\right.
\end{align}

\begin{align}\label{eq1.1}
|a{_u}{_v}| =
\left\{
\begin{array}{ll}
1 & \mbox{if } {(u,v)}\in E \\
0 & \mbox{if } {(u,v)}\notin E
\end{array}
\right.
\end{align}

The \emph{positive degree} and \emph{negative degree} of node $i$ are denoted by $d^+ _{i}$ and $d^- _{i}$ representing the number of positive and negative edges incident on node $i$ respectively. They are calculated based on $d^+ _{i}=(\sum_j |a_{ij}| + \sum_j a_{ij})/2$ and $d^- _{i}=(\sum_j |a_{ij}| - \sum_j a_{ij})/2$. The \emph{degree} of node $i$ is represented by $d_i$ and equals the number of edges incident on node $i$. It is calculated based on $d_{i}= d^+ _{i} + d^- _{i}= \sum_j |a_{ij}|$. 

A \emph{walk} of length $k$ in $G$ is a sequence of nodes $v_0,v_1,...,v_{k-1},v_k$ such that for each $i=1,2,...,k$ there is an edge from $v_{i-1}$ to $v_i$. If $v_0=v_k$, the sequence is a \emph{closed walk} of length $k$. If all the nodes in a closed walk are distinct except the endpoints, it is a \emph{cycle} (simple cycle) of length $k$. The \emph{sign of a cycle} is the product of the signs of its edges. A cycle is \emph{balanced} if its sign is positive and is \emph{unbalanced} otherwise. The total number of balanced cycles (closed walks) is denoted by $O_k ^+$ ($Q_k ^+$). Similarly, $O_k ^-$ ($Q_k ^-$) denotes the total number of unbalanced cycles (closed walks). The total number of cycles (closed walks) is represented by $O_k=O_k ^+ + O_k ^-$ ($Q_k=Q_k ^+ + Q_k ^-$) .

\section{Checking for balance}\label{s:check}
It is essential to have an algorithmic means of checking for balance. We recall several known methods here. The characterization of \emph{bi-polarity} (also called bipartitionability), that a signed graph is balanced if and only if its vertex set can be partitioned into two subsets such that each negative edge joins vertices belonging to different subsets \cite{harary_notion_1953}, leads to an obvious breadth-first search procedure similar to the usual algorithm for determining whether a graph is bipartite. 
An alternate algebraic criterion is that the eigenvalues of the signed and unsigned adjacency matrices are equal if and only if the signed network is balanced \cite{acharya_spectral_1980}. For our purposes the following additional method of detecting balance is also important. We define the \emph{switching function} $g(X)$ operating over a set of vertices $X\subseteq V$ as follows.
\begin{align} \label{eq1.2}
\sigma^{g(X)} _{(u,v)}=
\left\{
\begin{array}{ll}
\sigma_{(u,v)} & \mbox{if } {u,v}\in X  \ \text{or} \ {u,v}\notin X  \\
-\sigma_{(u,v)} & \mbox{if } (u \in X \ \text{and} \ v\notin X) \ \text{or} \ (u \notin X \ \text{and} \ v \in X)
\end{array}
\right.
\end{align}
As the sign of cycles remains the same when $g$ is applied, any balanced graph can switch to an all-positive signature \cite{hansen_labelling_1978,harary_simple_1980}. Accordingly, a balance detection algorithm 
can be developed by constructing a switching rule on a spanning tree and a root vertex, as suggested in \cite{hansen_labelling_1978,harary_simple_1980}.
Finally, another method of checking for balance in connected signed networks makes use of the signed Laplacian matrix defined by $\textbf{L}=\textbf{D}-\textbf{A}$ where $\textbf{D}_{ii} = \sum_j |a_{ij}|$ is the diagonal matrix of degrees. The signed Laplacian matrix, $\textbf{L}$, is positive-semidefinite i.e. all of its eigenvalues are nonnegative \cite{zaslavsky1983signed,zaslavsky_matrices_2010}. The smallest eigenvalue of $\textbf{L}$ equals 0 if and only if the graph is balanced \cite[Section 8A]{zaslavsky1983signed}.

\section{Measures of partial balance} \label{s:measure}
Several ways of measuring the extent to which a graph is balanced have been introduced by researchers. We discuss three families of measures here and summarize them in Table~\ref{tab1}.

\subsection{Measures based on cycles}\label{ss:familyc}

The simplest of such measures is the \textit{degree of balance} suggested by Cartwright and Harary \cite{cartwright_structural_1956}, which is the fraction of balanced cycles:
\begin{align}\label{eq1.5}
D(G)= \frac {\sum \limits_{k=3}^n O_k ^+ } {\sum \limits_{k=3}^n O_k}
\end{align}

There are other cycle-based measures closely related to $D(G)$. The \emph{relative $k$-balance}, denoted by $D_k(G)$ and formulated in \eqref{eq1.6} is a cycle-based measure where the sums defining the numerator and denominator of $D(G)$ are restricted to a single term of fixed index $k$ \cite{harary_structural_1957,harary1977graphing}. The special case $k=3$ is called the \emph{triangle index}, denoted by $T(G)$.

\begin{align}\label{eq1.6}
D_k(G)= \frac {O_k ^+} {O_k}
\end{align}

Giscard et al.\ have recently introduced efficient algorithms for counting simple cycles \cite{giscard2016general} making it possible to use various measures related to $D_k(G)$ to evaluate balance in signed networks \cite{Giscard2016}. 

A generalization is \emph{weighted degree of balance}, obtained by weighting cycles based on length as in \eqref{eq1.7}, in which $f(k)$ is a monotonically decreasing nonnegative function of the length of the cycle. 

\begin{align}\label{eq1.7}
C(G)=\frac{\sum \limits_{k=3}^n f(k) O_k ^+ }{\sum \limits_{k=3}^n f(k) O_k}
\end{align}

The selection of an appropriate weighting function is briefly discussed by Norman and Roberts \cite{norman_derivation_1972}, suggesting functions such as $1/k,1/k^2,1/2^k$, but no objective criterion for choosing such a weighting function is known. We consider two weighting functions $1/k$ and $1/k!$ for evaluating $C(G)$ in this paper. Given the typical distribution of cycles of different length, the function $f(k)=1/k$ provides a rate of decay slower than the typical rate of increase in cycles of length $k$, resulting in $C(G)$ being dominated by longer cycles. The function $f(k)=1/k!$ provides a rate of decay faster than the typical rate of increase in cycles of length $k$, resulting in $C(G)$ being mostly determined by shorter cycles.

Although fast algorithms are developed for counting and listing cycles of undirected graphs \cite{birmele2013optimal, giscard2016general}, the number of cycles grows 
exponentially with the size of a typical real-world network. 
To tackle the computational complexity, Terzi and Winkler \cite{terzi_spectral_2011} used $D_3(G)$ in their study and replaced the triangles by closed walks of length $3$. The triangle index can be calculated efficiently by the formula in \eqref{eq1.8} where $\Tr(\textbf{A})$ denotes the trace of ${\textbf{A}}$. 
\begin{align}\label{eq1.8}
T(G)= D_3(G) = \frac{O_3 ^+}{O_3} = \frac{\Tr({\textbf{A}}^3)+\Tr({|\textbf{A}|}^3)}{2 \Tr({|\textbf{A}|}^3)} 
\end{align}

The \emph{relative signed clustering coefficient} is suggested as a measure of balance by Kunegis \cite{kunegis_applications_2014}, taking insight from the classic clustering coefficient. After normalization, this measure is equal to the triangle index. Having access to an easy-to-compute formula \cite{terzi_spectral_2011} for $T(G)$ obviates the need for a clustering-based calculation which requires iterating over all triads in the graph.

Bonacich argues that dissonance and tension are unclear in cycles of length greater than three \cite{bonacich_introduction_2012}, justifying the use of the triangle index to analyze structural balance. However, the neglected interactions may represent potential tension and dissonance, though not as strong as that represented by unbalanced triads
. One may consider a smaller weight for longer cycles, thereby reducing their impact rather than totally disregarding them. Note that $C(G)$ is a generalization of both $D(G)$ and $D_k(G)$. 

In all the cycle-based measures, we consider a value of $1$ for the case of division by zero. This allows the measures $D(G)$ and $C(G)$ ($D_{k}(G)$) to provide a value for acyclic graphs (graphs with no $k$-cycle).

\subsection{Measures related to eigenvalues}\label{ss:familye}

Beside checking cycles, there are computationally easier approaches to evaluating structural balance such as the walk-based approach. The \emph{walk-based measure of balance} is suggested by Pelino and Maimone \cite{pelino2012towards} with more weight placed on shorter closed walks than the longer ones. Let $\Tr(e^\textbf{A})$ and $\Tr(e^{|\textbf{A}|})$ denote the trace of the matrix exponential for $\textbf{A}$ and $|\textbf{A}|$ respectively. In Equation \eqref{eq2}, closed walks are weighted by a function with a relatively fast rate of decay compared to functions suggested in \cite{norman_derivation_1972}. The weighted ratio of balanced to total closed walks is formulated in \eqref{eq2}. 
\begin{align}\label{eq2}
W(G)= \frac {K(G)+1}{2}, \quad K(G)=\frac{\sum \limits_{k}\frac{Q_k ^+ - Q_k ^-}{k!}}{\sum \limits_{k} \frac{Q_k ^+ + Q_k ^-}{k!}}=\frac{\Tr(e^\textbf{A})}{\Tr(e^{|\textbf{A}|})}  
\end{align}
Regarding the calculation of $\Tr(e^\textbf{A})$, one may use the standard fact that $\textbf{A}$ is a symmetric matrix for undirected graphs. It follows that $\Tr(e^\textbf{A})=\sum _{i} e^{\lambda_i}$ in which $\lambda_i$ ranges over eigenvalues of $\textbf{A}$. The idea of a walk-based measure was then used by Estrada and Benzi \cite{estrada_walk-based_2014}. They have tested their measure on five signed networks resulting in values inclined towards imbalance which were in conflict with some previous observations \cite{facchetti_computing_2011,kunegis_applications_2014}. The walk-based measure of balance suggested in \cite{estrada_walk-based_2014} have been scrutinized in the subsequent studies \cite{Giscard2016,singh2017measuring}. Giscard et al. discuss how using closed-walks in which the edges might be repeated results in mixing the contribution of various cycle lengths and leads to values that are difficult to interpret \cite{Giscard2016}. Singh et al. criticize the walk-based measure from another perspective and explains how the inverse factorial weighting distorts the measure towards showing imbalance \cite{singh2017measuring}.

The idea of another eigenvalue-based measure comes from spectral graph theory \cite{kunegis_spectral_2010}. The smallest eigenvalue of the signed Laplacian matrix provides a measure of balance for connected graphs called \textit{algebraic conflict} \cite{kunegis_spectral_2010}. Algebraic conflict, denoted by $\lambda(G)$, equals zero if and only if the graph is balanced. Positive-semidefiniteness of $\textbf{L}$ results in $\lambda(G)$ representing the amount of imbalance in a signed network. Algebraic conflict is used in \cite{kunegis_applications_2014} to compare the level of balance in online signed networks of different sizes. Moreover, Pelino and Maimone analyzed signed network dynamics based on $\lambda(G)$ \cite{pelino2012towards}. Bounds for $\lambda(G)$ are investigated by \cite{Hou2004} leading to recent applicable results in \cite{Belardo2014,Belardo2016}. Belardo and Zhou prove that $\lambda(G)$ for a fixed $n$ is maximized by the complete all-negative graph of order $n$ \cite{Belardo2016}. Belardo shows that $\lambda(G)$ is bounded by $\lambda_\text{max}(G)=\overline{d}_{\text{max}} -1$ in which $\overline{d}_{\text{max}}$ represents the maximum average degree of endpoints over graph edges \cite{Belardo2014}. We use this upper bound to normalize algebraic conflict. \textit{Normalized algebraic conflict}, denoted by $A(G)$, is expressed in \eqref{eq3}. 
\begin{align}\label{eq3}
A(G)=1- \frac {\lambda(G)}{\overline{d}_{\text{max}} -1} , \quad \overline{d}_{\text{max}}=\max_{(u,v)\in E} (d_u+d_v)/2
\end{align}

\subsection{Measures based on frustration}\label{ss:familyf}

A quite different measure is the \emph{frustration index} \cite{abelson_symbolic_1958,harary_measurement_1959,zaslavsky_balanced_1987} that is also referred to as the \emph{line index for balance} \cite{harary_measurement_1959}. A set $E^*$ of edges is called \textit{minimum deletion set} if deleting all edges in $E^*$ results in a balanced graph and deleting edges from no smaller set leads to a balanced graph. The frustration index equals the cardinality of a minimum deletion set as in Eq.\ \eqref{eq5.0}. 
\begin{align}\label{eq5.0}
L(G)= {|E^*|}
\end{align}

Each edge in $E^*$ lies on an unbalanced cycle and every unbalanced cycle of the network contains an odd number of edges in $E^*$. Iacono et al. showed that $L(G)$ equals the minimum number of unbalanced fundamental cycles induced over all spanning trees of the graph \cite{iacono_determining_2010}. The graph resulted from deleting all edges in $E^*$ is called a \textit{balanced transformation} of a signed graph. 

Similarly, in a setting where each vertex is given a black or white color, if the endpoints of positive (negative) edges have different colors (same color), they are ``frustrated". The frustration index is therefore the smallest number of frustrated edges over all possible 2-colorings of the nodes.

$L(G)$ is hard to compute as the special case with all edges being negative is equivalent to the MAXCUT problem \cite{gary1979computers}, which is known to be NP-hard. There are upper bounds for the frustration index such as $L(G)\leq m^-$ which states the obvious result of removing all negative edges. 

Facchetti, Iacono, and Altafini have used computational methods related to Ising spin glass models to estimate the frustration index in relatively large online social networks \cite{facchetti_computing_2011}. Using an estimation of the frustration index obtained by a heuristic algorithm, they concluded that the online signed networks are extremely close to total balance; an observation that contradicts some other research studies like \cite{estrada_walk-based_2014}. 

The number of frustrated edges in special Erd\H{o}s-R\'{e}nyi graphs is analyzed by El Maftouhi, Manoussakis and Megalakaki \cite{el_maftouhi_balance_2012}. It follows a binomial distribution with parameters $n(n-1)/2$ and $p/2$ in which $p$ represents equal probabilities for positive and negative edges in Erd\H{o}s-R\'{e}nyi graph. Therefore, the expected number of frustrated edges is $n(n-1)p/4$. They also prove that such a network is almost always not balanced when $p \geq (\log2)/n$. It is straightforward to prove that frustration index is equal to the minimum number of negative edges over all switching functions \cite{zaslavsky_matrices_2010}. Moreover, if $m^- {(G^{g(X)})}=L(G)$ then every vertex under switching ${g(X)}$ satisfies $d^- _{v^{g(X)}} \leq d^+ _{v^{g(X)}}$. Petersdorf \cite{petersdorf_einige_1966} proves that the frustration index is bounded by $\lfloor \left(n-1 \right)^2/4 \rfloor$.

Bounds for the largest number of frustrated edges for a graph with $n$ nodes and $m$ edges are provided in \cite{akiyama_balancing_1981}. It follows that ${L(G)} \leq {m}/{2}$; an upper bound that is not necessarily tight.

Another upper bound for the frustration index is reported in \cite{iacono_determining_2010} referred to as the worst-case upper bound on the consistency deficit. However, the frustration index values in complete graphs with all negative edges shows that the upper bound is incorrect.

In order to compare with the other indices which take values in the unit interval and give the value $1$ for balanced graphs, we suggest \emph{normalized frustration index}, denoted by $F(G)$ and formulated in \eqref{eq5.1}. 

\begin{align}\label{eq5.1}
F(G)=1- \frac {L(G)}{m/2}
\end{align}

Using a different upper bound for normalizing the frustration index, we discuss another frustration-based measure in subsection \ref{ss:normal} and formulate it in Eq.\ \ref{eq5.8}.

\subsection{Other methods of evaluating balance}

Balance can also be analyzed by blockmodeling based on iteratively calculating Pearson moment correlations from the columns of $\textbf{A}$ \cite{doreian_generalized_2005}. Blockmodeling reveals increasingly homogeneous sets of vertices. 

Doreian and Mrvar discuss this approach in partitioning signed networks \cite{doreian_partitioning_2009}. Applying the method to Correlates of War data on positive and negative international relationships, they refute the hypothesis that signed networks gradually move towards balance using blockmodeling alongside some variations of $D(G)$ and $L(G)$ \cite{patrick_doreian_structural_2015}. 

Moreover, there are probabilistic methods that compare the expected number of balanced and unbalanced triangles in the signed network and its reshuffled version \cite{leskovec_signed_2010,yap_why_2015, Szell_multi, Szell_measure}. As long as these measures are used to evaluate balance, the result will not be different to what $T(G)$ provides alongside a basic statistical testing of its value against reshuffled networks.

Some researchers suggest that studying the structural dynamics of signed networks is more important than measuring balance \cite{cai2015particle,ma_memetic_2015}. This approach is usually associated with considering an energy function to be minimized by local graph operations decreasing the energy. However, the energy function is somehow a measure of network imbalance which requires a proper definition and investigation of axiomatic properties. Seven measures of partial balance investigated in this paper are outlined in Table~\ref{tab1}.

\begin{table}[ht]
	\centering
	\caption{Measures of partial balance summarized}
	\label{tab1}
	\begin{tabularx}{\textwidth}{ll}
 \hline
		Measure & \multicolumn{1}{c}{Name, Reference, and Description}                                                 \\ \hline
		$D(G)$    & \textit{Degree of balance} \cite{cartwright_structural_1956, harary_measurement_1959}                      \\
		& A cycle-based measure representing the ratio of balanced cycles                          \\
		$C(G)$    & \textit{Weighted degree of balance} \cite{norman_derivation_1972}                                 \\
		& An extension of $D(G)$ using cycles weighted by a non-increasing function of length  \\
		$D_k(G)$& \textit{Relative $k$-balance} \cite{harary_structural_1957, harary1977graphing} \\
		& An extension of $D(G)$ placing a non-zero weight only on cycles of length $k$  \\
		$T(G)$    &  \textit{Triangle index} \cite{terzi_spectral_2011, kunegis_applications_2014}                             \\
		& A triangle-based measure representing the ratio of balanced triangles ($D_3(G)$)          \\
		$W(G)$    & \textit{Walk-based measure of balance} \cite{pelino2012towards, estrada_walk-based_2014}          \\
		& A simplified extension of $D(G)$ replacing cycles by closed walks                        \\
		$A(G)$    & \textit{Normalized algebraic conflict} \cite{kunegis_spectral_2010, kunegis_applications_2014}      \\
		& A normalized measure using least eigenvalue of the Laplacian matrix          \\
		$F(G)$    & \textit{Normalized frustration index} \cite{harary_measurement_1959, facchetti_computing_2011}                    \\
		& Normalized minimum number of edges whose removal results in balance                                \\ \hline 
	\end{tabularx}
\end{table}

\subsection*{\textbf{Outline of the rest of the paper}}
We started by discussing balance in signed networks in Sections \ref{s:problem} and \ref{s:check} and reviewed different measures in Section~\ref{s:measure}. We will provide some observations on synthetic data in Figures \ref{fig1} -- \ref{fig5} in Sections \ref{s:basic} and \ref{s:special} to demonstrate the values of different measures. The reader who is not particularly interested in the analysis of measures using synthetic data may directly go to Section~\ref{s:axiom} in which we introduce axioms and desirable properties for measures of partial balance. In Section~\ref{s:recom}, we provide some recommendations on choosing a measure and discuss how using unjustified measures has led to conflicting observations in the literature. The numerical results on real signed networks are presented in Section~\ref{s:real}.

\section{Numerical results on synthetic data} \label{s:basic}
In this section, we start with a brief discussion on the relationship between negative edges and imbalance in networks. According to the definition of structural balance, all-positive signed graphs (merely containing positive edges) are totally balanced. Intuitively, one may expect that all-negative signed graphs are very unbalanced. Perhaps another intuition derived by assuming symmetry is that increasing the number of negative edges in a network reduces partial balance proportionally. We analyze partial balance in randomly generated graphs to evaluate these intuitions. Our motivation for analyzing balance in such graphs is to gain an understanding of the behaviour of the measures and their connections with signed graph parameters like $m^-,n$, and density. We use randomly generated graphs as the underlying structure is not important for this analysis. 

\subsection{Erd\H{o}s-R\'{e}nyi random network with various $m^-$}
\label{ss:erdos}
We calculate measures of partial balance, denoted by $\mu(G)$, for an Erd\H{o}s-R\'{e}nyi random network with a various number of negative edges. Figure~\ref{fig1} demonstrates the partial balance measured by different methods. For each data point, we report the average of 50 runs, each assigning negative edges at random to the fixed underlying graph. The subfigures (c) and (d) of Figure~\ref{fig1} show the mean along with $\pm 1$ standard deviation. 

\begin{figure}
	
	\subfloat[The mean values of seven measures of partial balance]{\includegraphics[height=2.4in]{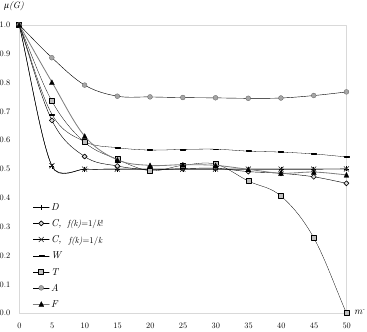}%
		\label{fig1_first_case}}
	\hfil
	\subfloat[The mean values of relative $k$-balance $D_k$]{\includegraphics[height=2.4in]{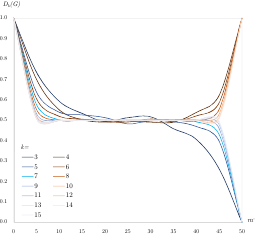}%
		\label{fig1_second_case}}
	
	~
	
	\subfloat[The standard deviation of $D$ and $C$]{\includegraphics[height=2.5in]{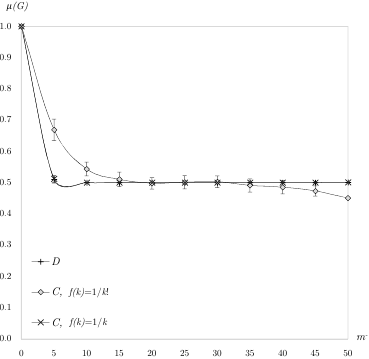}%
		\label{fig1_third_case}}
	\hfil
	\subfloat[The standard deviation of $W, T, A$ and $F$]{\includegraphics[height=2.5in]{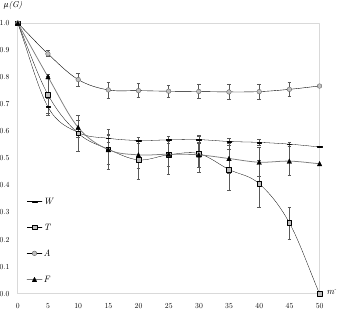}%
		\label{fig1_fourth_case}}
	
	\caption{Partial balance measured by different methods in Erd\H{o}s-R\'{e}nyi network with various number of negative edges (color version online)}
	\label{fig1}
\end{figure}

Measures $D(G)$ and $C(G)$ with $f(k)=1/k$ are observed to tend to $0.5$ where $m^- > 5$, not differentiating partial balance in graphs with a non-trivial number of negative edges. Given the typical distribution of cycles of different length, we expect $D(G)$ and $C(G)$ with $f(k)=1/k$ to be mostly determined by longer cycles that are much more frequent. For this particular graph, cycles with a length of 10 and above account for more than $96 \%$ of the total cycles in the graph. Such long cycles tend to be balanced roughly half the time for almost all values of $m^-$ (for all the values within the range of $5 \leq m^- \leq 45$ in the network considered here). The perfect overlap of data points for $D(G)$ and $C(G)$ with $f(k)=1/k$ in Figure~\ref{fig1} shows that using a linear rate of decay does not make a difference. One may think that if we use $D_k(G)$ which does not mix cycles of different length, it may circumvent the issues. However, subfigure (b) of Figure~\ref{fig1} demonstrating values of $D_k(G)$ for different cycle lengths shows the opposite. It shows not only does $D_k(G)$ not resolve the problems of lack of sensitivity and clustering around $0.5$, but it behaves unexpectedly with substantially different values based on the parity of $k$ when $m^- > 35$. $C(G)$ weighted by $f(k)=1/k!$ and mostly determined by shorter cycles, decreases slower than $D(G)$ and then provides values close to $0.5$ for $m^-\geq10$. $W(G)$ drops below $0.6$ for $m^-=10$ and then clusters around $0.55$ for $m^->10$. $T(G)$ is the measure with a wide range of values symmetric to $m^-$. The single most striking observation to emerge is that $A(G)$ seems to have a completely different range of values, which we discuss further in Subsection \ref{ss:normal}. A steady linear decrease is observed from $F(G)$ for $m^-\leq10$. 

\subsection{4-regular random networks of different orders}\label{ss:regular}
To investigate the impacts of graph order (number of nodes) and density on balance, we computed the measures for randomly generated 4-regular graphs with 50 percent negative edges. Intuitively we expect values to have low variation and no trends for similarly structured graphs of different orders. Figure~\ref{fig2} demonstrates the analysis in a setting where the degree of all the nodes remains constant, but the density ($4/n-1$) is decreasing in larger graphs. For each data point the average and standard deviation of 100 runs are reported. In each run, negative weights are randomly assigned to half of the edges in a fixed underlying 4-regular graph of order $n$.

\begin{figure}
	\centering
	\includegraphics[width=0.65\textwidth]{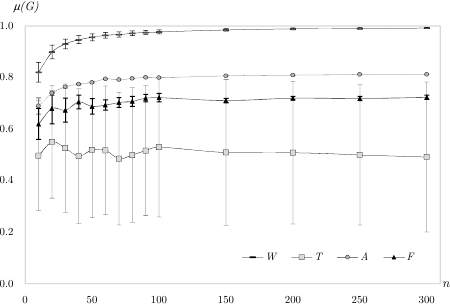}
	\caption{Partial balance measured by different methods in 50\% negative 4-regular graphs of different orders $n$ and decreasing densities $4/n-1$}
	\label{fig2}
\end{figure}

According to Figure~\ref{fig2}, the four measures differ not only in the range of values, but also in their sensitivity to the graph order and density. First, $W(G) \rightarrow 1$ when $n \rightarrow \infty$ for larger graphs although the graphs are structurally similar, which goes against intuition. Clustered around $0.5$ is $T(G)$ which features a substantial standard deviation for 4-regular random graphs. Values of $A(G)$ are around $0.8$ and do not seem to change substantially when $n$ increases. $F(G)$ provides stationary values around $0.7$ when $n$ increases. While $\lambda(G)$ and $L(G)$ depend on the graph order and size, the relative constancy of $A(G)$ and $F(G)$ values suggest the normalized measures $A(G)$ and $F(G)$ are largely independent of the graph size and order, as our intuition expects. We further discuss the normalization of $A(G)$ and $F(G)$ in Subsection \ref{ss:normal}.

\section{Analytical results on synthetic data} \label{s:special}
In this section, we analyze the capability of measuring partial balance in some families of specially structured graphs. Closed-form formulae for the measures in specially structured graphs are provided in Table~\ref{tab2}. The two families of complete signed graphs that we investigate are as follows in \ref{ss:kna} and \ref{ss:knc}. %

\subsection{Minimally unbalanced complete graphs with a single negative edge}\label{ss:kna}
The first family includes complete graphs with a single negative edge, denoted by ${K}_n^a$. Such graphs are only one edge away from a state of total balance. It is straight-forward to provide closed-form formulae for $\mu({K}_n^a)$ as expressed in \eqref{eq15} -- \eqref{eq18} in Appendix \ref{s:calc}.

\begin{table}[ht]
	\centering
	\caption{Balance in minimally and maximally unbalanced graphs ${K}_n^a$ (\ref{ss:kna}) and ${K}_n^c$ (\ref{ss:knc})}
	\label{tab2}
	\begin{tabular}{lll}
		\hline
		$\mu(G)$ & ${K}_n^a$                                                                              & ${K}_n^c$                                                                  \\ \hline
		$D(G)$      & $\sim 1- {2}/{n}$ & $\sim \frac{1}{2} + (-1)^n e^{-2}$     \\
		$C(G), f(k)=1/k!$      & $\sim 1- {1}/{n}$ &  $\sim \frac{1}{2} - \frac{3n\log n}{2^n}$\\
		$D_k(G)$    & $1-{2k}/{n(n-1)}$ & $0 , 1$                                       \\
		$W(G)$      & $\sim 1-{2}/{n}$  & $\sim \frac{1+e^{2-2n}}{2}$                 \\
		$A(G)$      & $\sim 1-{4}/{n^2}$& $0$                                        \\
		$F(G)$      & $1-{4}/{n(n-1)}$  & $\frac{1}{n},\frac{1}{n-1}$                \\ \hline
	\end{tabular}
\end{table}

\begin{figure}
	\centering
	\includegraphics[width=0.9\textwidth]{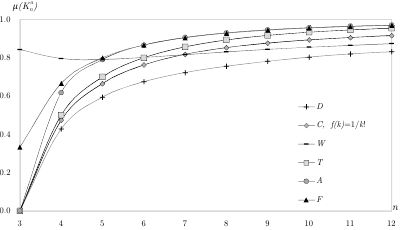}
	\includegraphics[width=0.9\textwidth]{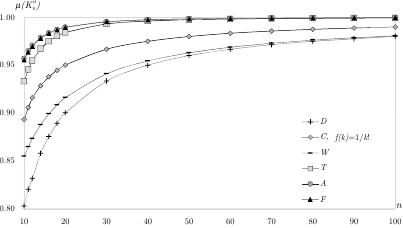}
	\caption{Partial balance measured by different methods for ${K}_n^a$ (\ref{ss:kna})}
	\label{fig3}
\end{figure}
In ${K}_n^a$, intuitively we expect $\mu({K}_n^a)$ to increase with $n$ and $\mu({K}_n^a) \rightarrow 1$ as $n \rightarrow \infty$. We also expect the measure to detect the imbalance in ${K}_3^a$ (a triangle with one negative edge). Figure~\ref{fig3} demonstrates the behavior of different indices for complete graphs with one negative edge. $W({K}_n^a)$ gives unreasonably large values for $n < 5$. Except for $W({K}_n^a)$, the measures are co-monotone over the given range of $n$.

\subsection{Maximally unbalanced complete graphs with all-negative edges}\label{ss:knc}
The second family of specially structured graphs to analyze includes all-negative complete graphs denoted by ${K}_n^c$. The indices are calculated in \eqref{eq21} -- \eqref{eq25} in Appendix \ref{s:calc}. 

Intuitively, we expect $\mu({K}_n^c) \rightarrow 0$ as $n \rightarrow \infty$. Figure~\ref{fig5} illustrates $D({K}_n^c)$ oscillating around $0.5$ and $W({K}_n^c),C({K}_n^c) \rightarrow 0.5$ as $n$ increases. We explain the oscillation of $D({K}_n^c)$ in Appendix \ref{s:calc}. Clearly, measures $D(G)$, $C(G)$ and, $W(G)$ do no work well for the family of graphs considered here. Figure~\ref{fig5} shows that $F({K}_n^c) \rightarrow 0$ as $n \rightarrow \infty$ as expected based on Table~\ref{tab2}.

\begin{figure}
	\centering
	\includegraphics[width=0.9\textwidth]{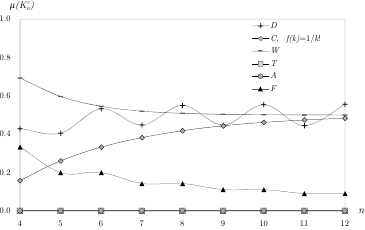}
	\includegraphics[width=0.9\textwidth]{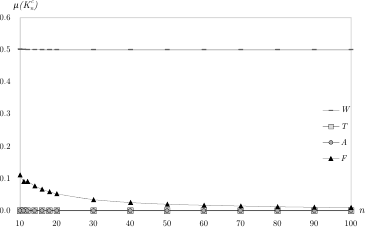}
	\caption{Partial balance measured by different methods for ${K}_n^c$ (\ref{ss:knc})}
	\label{fig5}
\end{figure}

\subsection{Normalization of the measures}\label{ss:normal}
It is worth mentioning that measures of partial balance may lead to different maximally unbalanced complete graphs. Based on $\lambda(G)$ and $L(G)$, ${K}_n^c$ represents the family of maximally unbalanced graphs, while it is merely one family among the maximally unbalanced graphs according to $T(G)$. Estrada and Benzi have found complete graphs comprised of one cycle of $n$ positive edges with the remaining pairs of nodes connected by negative edges to be a family of maximally unbalanced graphs based on $W(G)$ \cite{estrada_walk-based_2014}, while this argument is not supported by any other measures. As the signs of cycles in a graph are not independent, the structure of maximally unbalanced graphs under the cycle-based measures, $D(G)$ and $C(G)$, is not known. 

A simple comparison of $L({K}_n^c)$ (calculations provided in \eqref{eq24} in Appendix \ref{s:calc}) and the proposed upper bound $m/2=(n^2-n)/4$ reveals substantial gaps. These gaps equal $n/4$ for even $n$ and $(n-1)/4$ for odd $n$. This supports the previous discussions on looseness of $m/2$ as an upper bound for frustration index. As ${K}_n^c$ is maximally unbalanced under $L(G)$, $\lfloor m/2-(n-1)/4 \rfloor$ can be used as a tight upper bound for normalizing the frustration index. This allows a modified version of normalized frustration index, denoted by $F^\prime (G)$ and defined in \eqref{eq5.8}, to take the value zero for ${K}_n^c$.
\begin{align}
\label{eq5.8}
F^\prime (G)=1-L(G)/ \lfloor m/2-(n-1)/4 \rfloor
\end{align}

Similarly, the upper bound, $\lambda_\text{max}(G)$, used to normalize algebraic conflict, is not tight for many graphs. For instance, in the Erd\H{o}s-R\'{e}nyi studied in Section~\ref{s:basic} with $m=m^-$, the existence of an edge with $\overline{d}_{\text{max}}=9$ makes $\lambda_\text{max}(G)=8$, while $\lambda(G)=1.98$. 

The two observations mentioned above suggest that tighter upper bounds can be used for normalization. However, the statistical analysis we use in Section~\ref{s:real} to evaluate balance in real networks is independent of the normalization method, so we do not pursue this question further now.

\subsection{Expected values of the cycle-based measures}\label{ss:expected}

Relative $k$-balance, $D_k (G)$, is proved by El Maftouhi, Manoussakis and Megalakaki \cite{el_maftouhi_balance_2012} to tend to $0.5$ for Erd\H{o}s-R\'{e}nyi graphs such that the probability of an edge being negative is equal to $0.5$. Moreover, Giscard et al. discuss the probability distribution of  $1 - D_k(G)$. Their discussion is based on a model in which sign of any edge is negative with a fixed probability \cite[Section 4.2]{Giscard2016}. We use the same model to present some simple observations that appear not to have been noticed by previous authors advocating for the use of cycle-based measures. We are going to take a different approach from that of Giscard et al. and merely calculate the expected values of cycle-based measures in general, rather than the full distribution under additional assumptions. Note that, for an arbitrary graph, ${O_k ^+}/ {O_k}$ gives the probability that a randomly chosen $k$-cycle is balanced and is denoted by $B_{(k,q)}$.
\begin{theorem}
	\label{thm:random cycle}
	Let $G$ be a graph and consider the sign function obtained by independently choosing each edge to be negative with probability $q$, and positive otherwise. Then
	\begin{align}
	E(D_k(G)) & = (1+{(1-2q)}^k)/2 \label{eq1.9}
	\end{align}
\end{theorem}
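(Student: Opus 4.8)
The plan is to exploit the fact that the denominator $O_k$ counts all $k$-cycles of the underlying unsigned graph and is therefore a deterministic constant, unaffected by the random sign assignment; only the numerator $O_k^+$ is random. Consequently $E(D_k(G)) = E(O_k^+)/O_k$, which reduces the problem to computing $E(O_k^+)$. (This already tacitly assumes $O_k \ge 1$, so that $O_k^+/O_k$ is a genuine ratio rather than the value $1$ assigned by the division-by-zero convention stated earlier.)

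First I would write $O_k^+ = \sum_C \mathbf{1}[C \text{ is balanced}]$, where the sum runs over the $O_k$ distinct $k$-cycles of $G$, and apply linearity of expectation to obtain $E(O_k^+) = \sum_C P(C \text{ is balanced})$. The crucial observation is that each individual $k$-cycle consists of exactly $k$ distinct edges whose signs are drawn independently, so the event that $C$ is balanced depends only on those $k$ edges; in particular $P(C \text{ is balanced})$ is identical for every $k$-cycle, irrespective of how different cycles share edges. The overlap structure would matter only for higher moments of $D_k$, not for its mean, so the first-moment computation needs only within-cycle independence.

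Next I would evaluate this common probability. A cycle is balanced precisely when it contains an even number of negative edges, so $P(C \text{ is balanced}) = \sum_{j \text{ even}} \binom{k}{j} q^j (1-q)^{k-j}$. The clean way to handle this even-indexed binomial sum is the standard parity trick: add the expansions of $((1-q)+q)^k = 1$ and $((1-q)-q)^k = (1-2q)^k$, which doubles exactly the even-index terms and cancels the odd ones, yielding $P(C \text{ is balanced}) = (1 + (1-2q)^k)/2$.

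Substituting back gives $E(O_k^+) = O_k \cdot (1+(1-2q)^k)/2$, and dividing by the constant $O_k$ produces $E(D_k(G)) = (1+(1-2q)^k)/2$, as claimed. I do not anticipate any genuinely hard step: the argument is a one-line linearity-of-expectation computation once one recognizes that $O_k$ is a deterministic normalizer and that the balance probability of a single cycle is a self-contained binomial parity calculation. The only points meriting explicit comment are the hypothesis $O_k \ge 1$ and the emphasis that independence within each cycle --- not across the (possibly overlapping) family of all $k$-cycles --- is exactly what the mean requires.
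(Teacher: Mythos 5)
Your proposal is correct and follows essentially the same route as the paper: both compute the common balance probability of a single $k$-cycle as the even-index binomial sum $\sum_{j \text{ even}} \binom{k}{j} q^j (1-q)^{k-j}$ and simplify it via the parity trick of adding the expansions of $((1-q)+q)^k$ and $((1-q)-q)^k$. Your version merely makes explicit the linearity-of-expectation step and the deterministic-denominator observation that the paper leaves implicit (and your remark that the convention $D_k = 1$ for graphs with no $k$-cycles requires $O_k \geq 1$ is a fair, if minor, point of rigor).
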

\begin{proof}
	Note that a cycle is balanced if and only if it has an even number of negative edges. Thus $$E\left(\frac {O_k ^+} {O_k}\right) = \sum \limits_{i \: \text{even}} {\binom{k}{i}} q^i (1-q)^{k-i}$$ (compare with \cite[Eq. 4.1]{Giscard2016}). This simplifies to the stated formula (details of calculations are given in Appendix \ref{s:calc}).
\end{proof}

Note that the expected values are independent of the graph structure and obtaining them does not require making any assumptions on the signs of cycles being independent random variables. As the signs of the edges are independent random variables, the expected value of $B_{(k,q)}$ can be obtained by summing on all cases having an even number of negative signs in the $k$-cycle. 

Based on \eqref{eq1.9}, $E(D_k(G)) = 1$ when $q = 0$ and $E(D_k(G)) = 0.5$ when $q = 0.5$ supporting our intuitive expectations. However, when $q = 1$, $E(D_k(G))$ takes extremal values based on the parity of $k$ which is a major problem as previously observed in the subfigure (b) of Figure~\ref{fig1}. It is clear to see that the parity of $k$ makes a substantial difference to $D_k(G)$ when a considerable proportion of edges are negative.

\begin{theorem}
	\label{thm:randomallcycle}
	Let $G$ be a graph and consider the sign function obtained by independently choosing each edge to be negative with probability $q$, and positive otherwise. Then
	\begin{align}
	E(D(G)) & = \frac{1}{2} \frac { \sum \limits_{k=3}^n (1+{(1-2q)}^k)(O_k) } {\sum \limits_{k=3}^n O_k} \label{eq1.10}
	\end{align}
\end{theorem}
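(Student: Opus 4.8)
The plan is to exploit the fact that the denominator of $D(G)$ depends only on the underlying unsigned graph and not on the random sign assignment, so that the expectation can be evaluated by linearity after factoring the denominator out as a constant.

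First I would observe that $O_k$, the total number of $k$-cycles, is a structural invariant of $G$ determined entirely by $|\textbf{A}|$; the random signs affect only how each $O_k$ splits into balanced and unbalanced cycles, i.e. the value of $O_k^+$. Hence $\sum_{k=3}^n O_k$ is a deterministic constant, and
$$E(D(G)) = E\!\left(\frac{\sum_{k=3}^n O_k^+}{\sum_{k=3}^n O_k}\right) = \frac{\sum_{k=3}^n E(O_k^+)}{\sum_{k=3}^n O_k},$$
where the nonrandom denominator has simply been moved outside the expectation.

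Next I would compute $E(O_k^+)$ for each fixed $k$. Writing $O_k^+ = \sum_c \mathbf{1}[c \text{ balanced}]$, with the sum running over the $O_k$ cycles of length $k$ in $G$, linearity of expectation gives $E(O_k^+) = \sum_c P(c \text{ balanced})$. By the computation already performed in the proof of Theorem \ref{thm:random cycle} — summing the probabilities of an even number of negative signs among the $k$ edges of a cycle — every $k$-cycle is balanced with the same probability $(1+(1-2q)^k)/2$, irrespective of which edges it uses, since the edge signs are i.i.d. Therefore $E(O_k^+) = O_k\,(1+(1-2q)^k)/2$, and substituting this into the previous display while factoring out $1/2$ yields exactly \eqref{eq1.10}.

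The main conceptual obstacle is the temptation to read $E(D(G))$ as a weighted average of the per-length expectations $E(D_k(G))$ of Theorem \ref{thm:random cycle}; in general the expectation of a ratio is not the ratio of expectations, and the signs of distinct cycles are certainly not independent. What rescues the argument is precisely that the denominator is deterministic, so no independence across cycles is needed and bare linearity of expectation suffices. The only remaining edge case is an acyclic $G$, where $\sum_{k=3}^n O_k = 0$; here the stated convention that cycle-based measures equal $1$ fixes $D(G)\equiv 1$ deterministically, so $E(D(G))=1$, consistent with interpreting the degenerate ratio under the same convention.
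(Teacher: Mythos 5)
Your proof is correct and follows essentially the same route as the paper's: both arguments rest on the denominator $\sum_{k=3}^n O_k$ being a deterministic constant of the unsigned graph, so that linearity of expectation reduces everything to $E(O_k^+) = O_k\,(1+(1-2q)^k)/2$, which comes from the per-cycle balance probability established in Theorem \ref{thm:random cycle}. The paper phrases this step as $O_k^+ = B_{(k,q)}\, O_k$ and takes expectations of both sides, while you expand $O_k^+$ as a sum of indicator variables; these are the same argument, with yours merely making explicit the points (constancy of the denominator, the ratio-of-expectations subtlety, the acyclic edge case) that the paper leaves implicit.
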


\begin{proof}
	The random variable ${O_k ^+}$ can be written as ${O_k ^+}= B_{(k,q)}.{O_k}$. Taking expected value from the two sides gives $E({O_k ^+})= {O_k}.E(B_{(k,q)})$ as $O_k$ is a constant for a fixed $k$. This completes the proof using the result from Theorem \ref{thm:random cycle}.
\end{proof}

Note that the exponential decay of the factor $(1-2q)^k$ reduces the contribution for large $k$, and small values of $k$ will dominate for many graphs. For example, if $q=0.2$ the expression for $E(D(G))$ simplifies to 
$$
\frac{1}{2} + \frac{1}{2} \frac{ \sum \limits_{k=3}^n{0.6}^k O_k}{\sum \limits_{k=3}^n O_k}.
$$
For many graphs encountered in practice, $O_k$ will grow exponentially with $k$, but at a rate less than $1/0.6$, so the tail contribution will be small. Larger values of $q$ only make this effect more pronounced. Thus we expect that $E(D(G))$ will often be very close to $0.5$ in signed graphs with a reasonably large fraction of negative edges (we have already seen such a phenomenon in Subsection~\ref{ss:erdos}). A similar conclusion can be made for $C(G)$. This casts doubt on the usefulness of the measures that mix cycles of different length whether weighted or not. 

While we have also observed many problems involving values of cycle-based measures on synthetic data in other parts of Sections \ref{s:basic} and \ref{s:special}, we will continue evaluating their axiomatic properties in Section~\ref{s:axiom} and then summarize the methodological findings in Section~\ref{s:recom}.

\section{Axiomatic framework of evaluation} \label{s:axiom}
The results in Section~\ref{s:basic} and Section~\ref{s:special} indicate that the choice of measure substantially affects the values of partial balance. Besides that, the lack of a standard measure calls for a framework of comparing different methods. Two different sets of axioms are suggested in \cite{norman_derivation_1972}, which characterize the measure $C(G)$ inside a smaller family (up to the choice of $f(k)$). Moreover, the theory of structural balance itself is axiomatized in \cite{schwartz_friend_2010}. However, to our knowledge, axioms for general measures of balance have never been developed. Here we provide the first set of axioms and desirable properties for measures of partial balance, in order to shed light on their characteristics and performance.

\subsection{Axioms for measures of partial balance}
We define a measure of partial balance to be a function $\mu$ taking each signed graph to an element of $[0,1]$. Worthy of mention is that some of these measures were originally defined as a measure of imbalance (algebraic conflict, frustration index and the original walk-based measure) calibrated at $0$ for completely balanced structures, so that some normalization was required, and perhaps our normalization choices can be improved on (see Subsection \ref{ss:normal}). As the choice of $m/2$ as the upper bound for normalizing the line index of balance was somewhat arbitrary, another normalized version of frustration index is defined in \eqref{eq6}. 
\begin{align}
\label{eq6}	
X(G)=1-L(G)/{m^-}
\end{align}

Before listing the axioms, we justify the need for an axiomatic evaluation of balance measures. As an attempt to understand the need for axiomatizing measures of balance, we introduce two unsophisticated and trivial measures that come to mind for measuring balance. The fraction of positive edges, denoted by $Y(G)$, is defined in \eqref{eq7} on the basis that all-positive signed graphs are balanced. Moreover, a binary measure of balance, denoted by $Z(G)$, is defined in \eqref{eq8}. While $Y(G)$ and $Z(G)$ appear to be irrelevant, there is currently no reason not to use such measures.
\begin{align}
\label{eq7}	
Y(G)=m^{+}/m \quad 
\end{align}
\begin{align}
\label{eq8}
Z(G) =
\left\{
\begin{array}{ll}
1 & \mbox{if } $G \:$ \mbox{is totally balanced} \\
0 & \mbox{if } $G \:$ \mbox{is not balanced}
\end{array} \right.
\end{align}

We consider the following notation for referring to basic operations on signed graphs:

\noindent
$G^{g(X)}$ denotes signed graph $G$ switched by $g(X)$ (switched graph).\\
$G\oplus H$ denotes the disjoint union of two signed graphs $G$ and $H$ (disjoint union).\\
$G\ominus e$ denotes $G$ with $e$ deleted (removing an edge).\\
$G\ominus E^*$ denotes $G$ with minimum deletion edges removed (balanced transformation).\\
$G\oplus C^+_3$ denotes the disjoint union of graphs $G$ and a positive 3-cycle (adding a balanced 3-cycle).\\
$G\oplus C^-_3$ denotes the disjoint union of graphs $G$ and a negative 3-cycle (adding an unbalanced 3-cycle).\\
$e\in E^*$ denotes an edge in the minimum deletion set.\\
$G \ominus E^* \oplus e$ denotes the balanced transformation of a graph with an edge $e$ added to it.\\

We list the following axioms:
\begin{description}
	\item[A1] $0 \leq \mu(G) \leq 1$. 
	\item[A2] $\mu(G) = 1$ if and only if $G$ is balanced. 
	\item[A3] If $\mu(G) \leq \mu(H)$, then $\mu(G) \leq \mu(G\oplus H) \leq \mu(H)$.	 
	\item[A4] $\mu(G^{g(X)}) = \mu(G)$.
\end{description}

The justifications for such axioms are connected to very basic concepts in balance theory. We consider A1 essential in order to make meaningful comparisons between measures. Introducing the notion of partial balance, we argue that total balance, being the extreme case of partial balance, should be denoted by an extremal value as in A2. In A3, the argument is that the overall balance of two disjoint graphs is bounded between their individual balances. This also covers the basic requirement that the disjoint union of two copies of graph $G$ must have the same value of partial balance as $G$. Switching nodes should not change balance \cite{zaslavsky_matrices_2010} as in A4.

Table \ref{tab3} shows how some measures fail on particular axioms. The results provide important insights into how some of the measures are not suitable for measuring partial balance. A more detailed discussion on the proof ideas and counterexamples related to Table~\ref{tab3} is provided in Appendix \ref{s:counter}.
\begin{table}[hbtp]
	\centering
	\caption{Different measures satisfying(\cmark) or failing(\xmark) axioms}
	\label{tab3}
		\begin{tabular}{p{0.1cm}p{0.1cm}p{0.1cm}p{0.1cm}p{0.1cm}p{0.1cm}p{0.1cm}p{0.1cm}p{0.1cm}l}
 \hline
			   & $D(G)$ & $C(G)$ & $W(G)$ & $D_k(G)$ & $A(G)$ & $F(G)$ & $X(G)$ & $Y(G)$ & $Z(G)$ \\ \hline
			A1 & \cmark & \cmark & \cmark & \cmark &  \cmark & \cmark & \cmark & \cmark & \cmark \\
			A2 & \cmark & \cmark & \cmark & \xmark &  \xmark & \cmark & \cmark & \xmark & \cmark \\
			A3 & \cmark & \cmark & \cmark & \cmark &  \xmark & \cmark & \cmark & \cmark & \cmark \\
			A4 & \cmark & \cmark & \cmark & \cmark &  \cmark & \cmark & \xmark & \xmark & \cmark \\
			 \hline
		\end{tabular}
\end{table}
\subsection{Some other desirable properties}

We also consider four desirable properties that formalize our expectations of a measure of partial balance. We do not consider the following as axioms in that they are based on adding or removing 3-cycles and edges which may bias the comparison in favor of cycle-based and frustration-based measures.

Positive and negative 3-cycles are very commonly used to explain the theory of structural balance which makes B1 and B2 obvious requirements. Removing an edge from a minimum deletion set, should not decrease balance as in B3. Finally, adding such an edge should not increase balance as in B4.

\begin{description}
		\item[B1]  If $\mu(G) \neq 1$, then $\mu(G\oplus C^+_3) > \mu(G)$. 
		\item[B2]  If $\mu(G) \neq 0$, then $\mu(G\oplus C^-_3) < \mu(G)$
		\item[B3]  If $e\in E^*$, then $\mu(G \ominus e) \geq \mu(G)$.
		\item[B4]  If $\mu(G)\neq 0$ and $\mu(G \ominus E^* \oplus e)\neq 1$, then $\mu(G \oplus e) \leq \mu(G)$.
\end{description}

Table \ref{tab3.5} shows how some measures fail on particular desirable properties. It is worth mentioning that the evaluation in Tables \ref{tab3}--\ref{tab3.5} is somewhat independent of parametrization: for each strictly increasing function $h$ such that $h(0)=0$ and $h(1)=1$, the results in Tables \ref{tab3}--\ref{tab3.5} hold for $h(\mu(G))$. Proof ideas and counterexamples related to Table~\ref{tab3.5} is provided in Appendix \ref{s:counter}.

\begin{table}[hbtp]
	\centering
	\caption{Different measures satisfying(\cmark) or failing(\xmark) desirable properties}
	\label{tab3.5}
	\begin{tabular}{p{0.1cm}p{0.1cm}p{0.1cm}p{0.1cm}p{0.1cm}p{0.1cm}p{0.1cm}p{0.1cm}p{0.1cm}l}
		\hline
		   & $D(G)$ & $C(G)$ & $W(G)$ & $D_k(G)$ & $A(G)$  & $F(G)$ & $X(G)$ & $Y(G)$ & $Z(G)$ \\ \hline
		B1 & \cmark & \cmark & \cmark & \xmark &  \cmark & \cmark & \xmark & \xmark & \xmark \\
		B2 & \cmark & \cmark & \xmark & \xmark &  \xmark & \xmark & \xmark & \xmark & \cmark \\
		B3 & \xmark & \xmark & \xmark & \xmark &  \xmark & \cmark & \cmark & \xmark & \xmark \\
		B4 & \xmark & \xmark & \xmark & \xmark &  \xmark & \cmark & \cmark & \xmark & \cmark \\ 
		\hline
	\end{tabular}
\end{table}

Another desirable property, which we have not formulated as a formal requirement owing to its vagueness, is that the measure takes on a wide range of values. For example, $D(G)$ and $C(G)$ tend rapidly to $0.5$ as $n$ increases which makes their interpretation and possibly comparison with other measures difficult. A possible way to formalize it would be expecting $\mu(G)$ to give $0$ and $1$ on each complete graph of order at least $3$, for some assignment of signs of edges. This condition would be satisfied by $T(G)$ and $A(G)$, as well as $F^\prime(G)$. However, $D(G),C(G)$ and $W(G)$ would not satisfy this condition due to the existence of balanced cycles and closed walks in complete signed graphs of general orders.
Moreover, the very small standard deviation of $D(G)$, $C(G)$, and $W(G)$ makes statistical testing against the balance of reshuffled networks complicated. The measures $D(G)$, $C(G)$, and $W(G)$ also have shown some unexpected behaviors on various types of graphs discussed in Section~\ref{s:basic} and Section~\ref{s:special}.

\section{Discussion on methodological findings} \label{s:recom}

Taken together, the findings in Sections \ref{s:basic} -- \ref{s:axiom} give strong reason not to use cycle-based measures $D(G)$ and $C(G)$, regardless of the weights. The major issues with cycle-based measures $D(G)$ and $C(G)$ include the very small variance in randomly generated and reshuffled graphs, lack of sensitivity and clustering of values around 0.5 for graphs with a non-trivial number of negative edges. Recall the numerical analysis of synthetic data in Section~\ref{s:basic}, analytical results on the expected values of cycle-based measures in Subsection \ref{ss:expected}, and the numerical values which are difficult to interpret like the oscillation of $D(G)$ and values of $C(G)$ for ${K}_n^c$ graphs in Table~\ref{tab2} and Figure~\ref{fig5}.

The relative $k$-balance which is ultimately from the same family of measures, seems to resolve some, but not all the problems discussed above. However, it fails on several axioms and desirable properties. It is easy to compute $D_{3}(G)$ based on closed walks of length 3 \cite{terzi_spectral_2011} and there are recent methods resolving the computational burden of computing $D_{k}(G)$ for general $k$ \cite{Giscard2016,giscard2016general}. However, $D_{k>3}(G)$ cannot be used for cyclic graphs that do not have $k$-cycles. Besides, for networks with a large proportion of negative edges, the parity of $k$ substantially distorts the values of $D_{k}(G)$. Accepting all these shortcomings, one may use $D_{k}(G)$ when cycles of a particular length have a meaningful interpretation in the context of study.

Walk-based measures like $W(G)$ require a more systematic way of weighting to correct for the double-counting of closed walks with repeated edges. The shortcomings of $W(G)$ involving the weighting method and contribution of non-simple cycles are also discussed in \cite{Giscard2016,singh2017measuring}. Recall that $W(G) \rightarrow 1$ in 4-regular graphs when we increase $n$ as in our discussion in Subsection \ref{ss:regular}. Besides, $W({K}_n^c) \rightarrow 0.5$ as $n$ increases as discussed in Subsection \ref{ss:knc}. The commonly observed clustering of values near 0.5 may also present problems. Moreover, the model behind $W(G)$ is strange as signs of closed walks do not represent balance or imbalance. For these reasons we do not recommend $W(G)$ for future use.

The major weakness of the normalized algebraic conflict, $A(G)$, seems to be its incapability of evaluating the overall balance in graphs that have more than one connected component. Note that, some of the failures observed for $A(G)$ on axioms and desirable properties stem from its dependence on $\lambda(G)$ the smallest eigenvalue of the signed Laplacian matrix. $\lambda(G)$ might be determined by a component of the graph disconnected from other components and in turn not capturing the overall balance of the graph as a whole. For analysing graphs with just one cyclic connected component, one may use $A(G)$ while disregarding the acyclic components. However, if a graph has more than one cyclic connected component, using $A(G)$ or $\lambda(G)$ is similar to disregarding all but the most balanced connected component in the graph.

The three trivial measures, namely $X(G)$, $Y(G)$ and $Z(G)$, fail on various basic axioms and desirable properties in Tables \ref{tab3} and \ref{tab3.5}, and also show a lack of sensitivity to the graph, making them inappropriate to be used as measures of balance.

Satisfying almost all the axioms and desirable properties, $F(G)$ seems to measure something different from what is obtained using all cycles or all $k$-cycles, and be worth pursuing in future. Note that, $L(G)$ equals the minimum number of unbalanced fundamental cycles \cite{iacono_determining_2010}; suggesting a connection between the frustration and unbalanced cycles yet to be explored further. We recommend using $F(G)$ for all graphs as long as their size allows computing $L(G)$. Recently developed optimization models \cite{aref2017computing, aref2016exact} are shown to be capable of computing the frustration index in graphs of up to thousands of nodes and edges. For larger graphs, exact computation of $L(G)$ would be time consuming and it can be approximated using a nonzero optimality gap tolerance with the optimization models in \cite{aref2017computing, aref2016exact}. Alternatively, $A(G)$ and $D_k(G)$ seem to be the other options. Depending on the type of the graph, $k$-cycles might not necessarily capture global structural properties. For instance, this would make $D_3(G)$ an improper choice for some specific graphs like sparse 4-regular graphs (as in Subsection \ref{ss:regular}), square grids, and sparse graphs with a small number of 3-cycles. Similarly, $A(G)$ is not suitable for graphs that have more than one connected component (including many sparse graphs). 

\subsection*{\textbf{Notes on previous work}}

In the literature, balance theory is widely used on directed signed graphs. It seems that this approach is questionable in two ways. First, it neglects the fact that many edges in signed digraphs are not reciprocated. Bearing that in mind, investigating balance theory in signed digraphs deals with conflict avoidance when one actor in such a relationship may not necessarily be aware of good will or ill will on the part of other actors. This would make studying balance in directed networks analogous to studying how people avoid potential conflict resulting from potentially unknown ties. Secondly, balance theory does not make use of the directionality of ties and the concepts of sending and receiving positive and negative links. 

Leskovec, Huttenlocher and Kleinberg compare the reliability of predictions made by competing theories of social structure: balance theory and status theory (a theory that explicitly includes direction and gives quite different predictions) \cite{leskovec_signed_2010}. The consistency of these theories with observations is investigated through large signed directed networks such as Epinions, Slashdot, and Wikipedia. The results suggest that status theory predicts accurately most of the time while predictions made by balance theory are incorrect half of the time. This supports the inefficacy of balance theory for structural analysis of signed digraphs. For another comparison of the theories on signed networks, one may refer to a study of 8 theories to explain signed tie formation between students \cite{yap_why_2015}.

In a parallel line of research on network structural analysis, researchers differentiate between classical balance theory and structural balance specifically in the way that the latter is directional \cite{bonacich_introduction_2012}. They consider another setting for defining balance where absence of ties implies negative relationships. This assumption makes the theory limited to complete signed digraphs. Accordingly, 64 possible structural configurations emerge for three nodes. These configurations can be reduced to 16 classes of triads, referred to as 16 MAN triad census, based on the number of Mutual, Asymmetric, and Null relationships they contain. There are only 2 out of 16 classes that are considered balanced. New definitions are suggested by researchers in order to make balance theory work in a directional context. According to Prell \cite{prell_social_2012}, there is a second, a third, and a fourth definition of permissible triads allowing for 3, 7, and 9 classes of all 16 MAN triads. However, there have been many instances of findings in conflict with expectations \cite{prell_social_2012}.

Apart from directionality, the interpretation of balance measures is very important. Numerous studies have compared balance measures with their extremal values and found that signed networks are far from balanced, for example \cite{estrada_walk-based_2014}. However, with such a strict criterion, we must be careful not to look for properties that are almost impossible to satisfy. A much more systematic approach is to compare values of partial balance in the signed graphs in question to the corresponding values for reshuffled graphs \cite{Szell_multi, Szell_measure} as we have done in Section~\ref{s:real}. 

So far we formalized the notion of partial balance and compared various measures of balance based on their values in different graphs where the underlying structure was not important. We also evaluated the measures based on their axiomatic properties and ruled out the measures that we could not justify. In the next section, we focus on exploring real signed graphs based on the justified methods.

\section{Results on real signed networks} \label{s:real}

\begin{figure}[]
	\subfloat[Highland tribes network (G1), a signed network of 16 tribes of the Eastern Central Highlands of New Guinea \cite{read_cultures_1954}]{\includegraphics[width=2.5in]{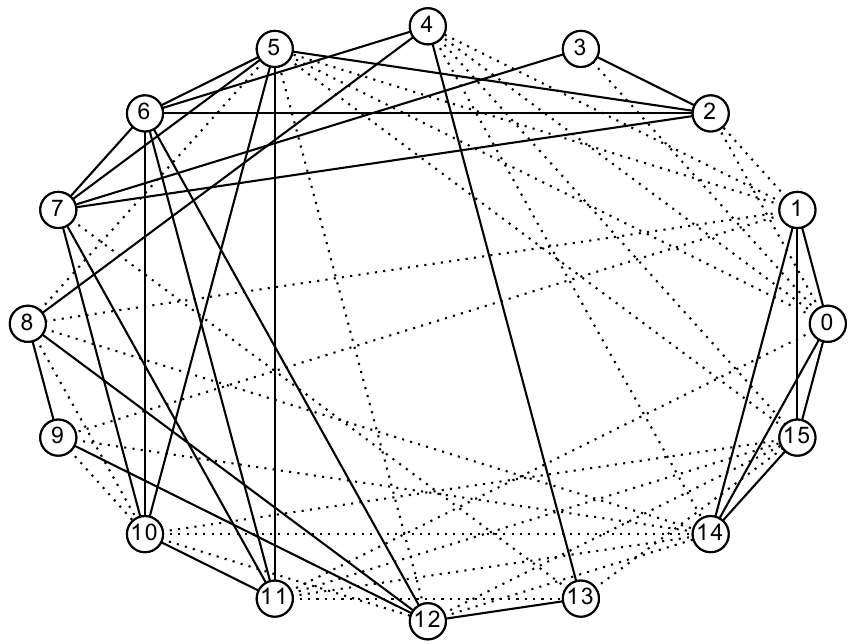}%
		\label{fig6_first_case}}
	\hfil
	\subfloat[Monastery interactions network (G2) of 18 New England novitiates inferred from the integration of all positive and negative relationships \cite{sampson_novitiate_1968}]{\includegraphics[width=2.5in]{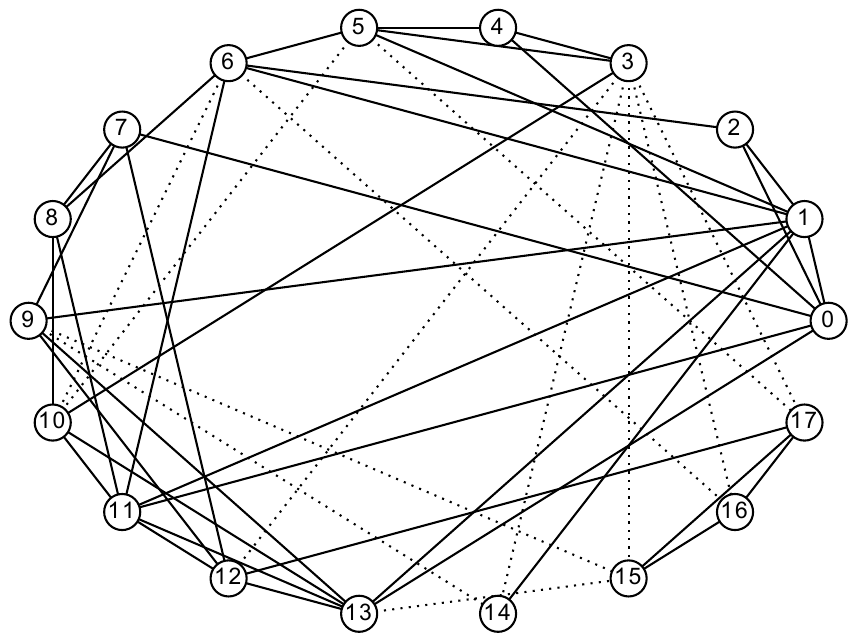}%
		\label{fig6_second_case}}
	~
	
	\subfloat[Fraternity preferences network (G3) of 17 boys living in a pseudo-dormitory inferred from ranking data of the last week in \cite{newcomb_acquaintance_1961}]{\includegraphics[width=2.5in]{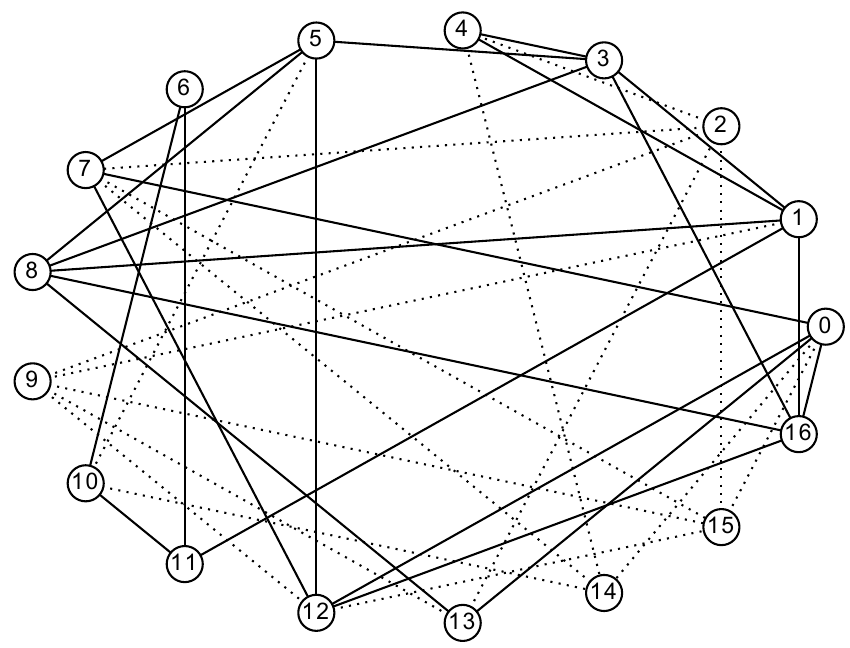}%
		\label{fig6_third_case}}
	\hfil
	\subfloat[College preferences network (G4) of 17 girls at an Eastern college inferred from ranking data of house B in \cite{lemann_group_1952}]{\includegraphics[width=2.5in]{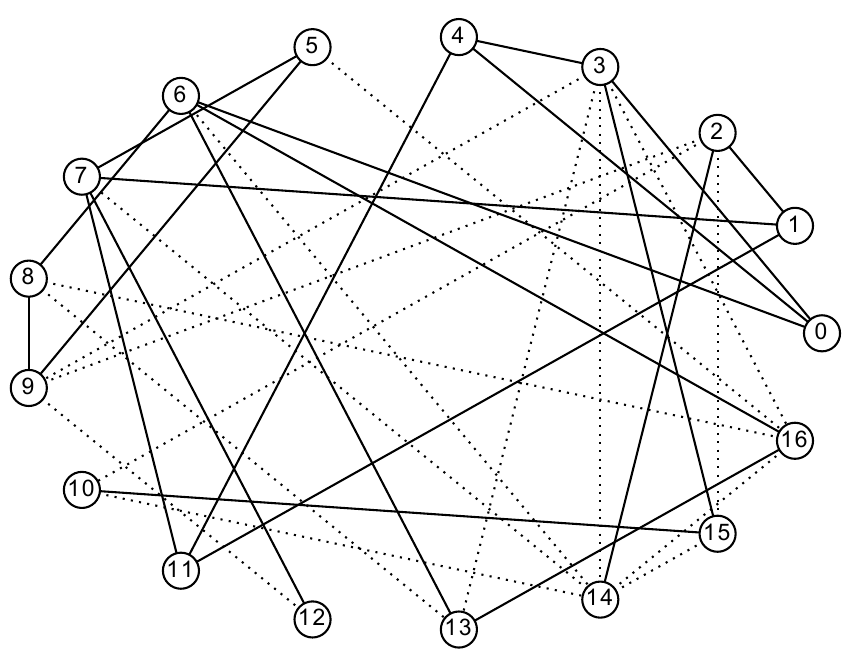}%
		\label{fig6_fourth_case}}
	\caption{Four small signed networks visualized where dotted lines represent negative edges and solid lines represent positive edges}
	\label{fig6}
\end{figure}

In this section, we analyze partial balance for a range of signed networks inferred from datasets of positive and negative interactions and preferences. Read's dataset for New Guinean highland tribes \cite{read_cultures_1954} is demonstrated as a signed graph (G1) in Figure~\ref{fig6}(a), where dotted lines represent negative edges and solid lines represent positive edges. The fourth time window of Sampson's dataset for monastery interactions \cite{sampson_novitiate_1968} (G2) is drawn in Figure~\ref{fig6}(b). We also consider datasets of students' choice and rejection (G3 and G4) \cite{newcomb_acquaintance_1961,lemann_group_1952} as demonstrated in Figure~\ref{fig6}(c) and Figure~\ref{fig6}(d). The last three are converted to undirected signed graphs by considering mutually agreed relations. A further explanation on the details of inferring signed graphs from the choice and rejection data is provided in Appendix \ref{s:infer}. 

A larger signed network (G5) is inferred by \cite{neal_backbone_2014} through implementing a stochastic degree sequence model on Fowler's data on Senate bill co-sponsorship \cite{fowler_legislative_2006}. Besides the signed social network datasets, large scale biological networks can be analysed as signed graphs. There are relatively large signed biological networks analysed by \cite{dasgupta_algorithmic_2007} and \cite{iacono_determining_2010} from a balance viewpoint under a different terminology where \textit{monotonocity} is the equivalent for balance. The two gene regulatory networks we consider are related to two organisms: a eukaryote (the yeast \textit{Saccharomyces cerevisiae}) and a bacterium (\textit{Escherichia coli}). Graphs G6 and G7 represent the gene regulatory networks of \textit{Saccharomyces cerevisiae} \cite{Costanzo2001yeast} and \textit{Escherichia coli} \cite{salgado2006ecoli} respectively. Note that, the densities of these networks are much smaller than the other networks introduced above. In gene regulatory networks, nodes represent genes. Positive and negative edges represent \textit{activating connections} and \textit{inhibiting connections} respectively. Figure~\ref{fig6.5} shows the bill co-sponsorship network as well as biological signed networks. The color of edges correspond to the signs on the edges (green for $+1$ and red for $-1$). For more details on the biological datasets, one may refer to \cite{iacono_determining_2010}.

\begin{figure}[]
	\centering
	\subfloat[The bill co-sponsorship network (G5) of senators \cite{neal_backbone_2014}]{\includegraphics[width=3.5in]{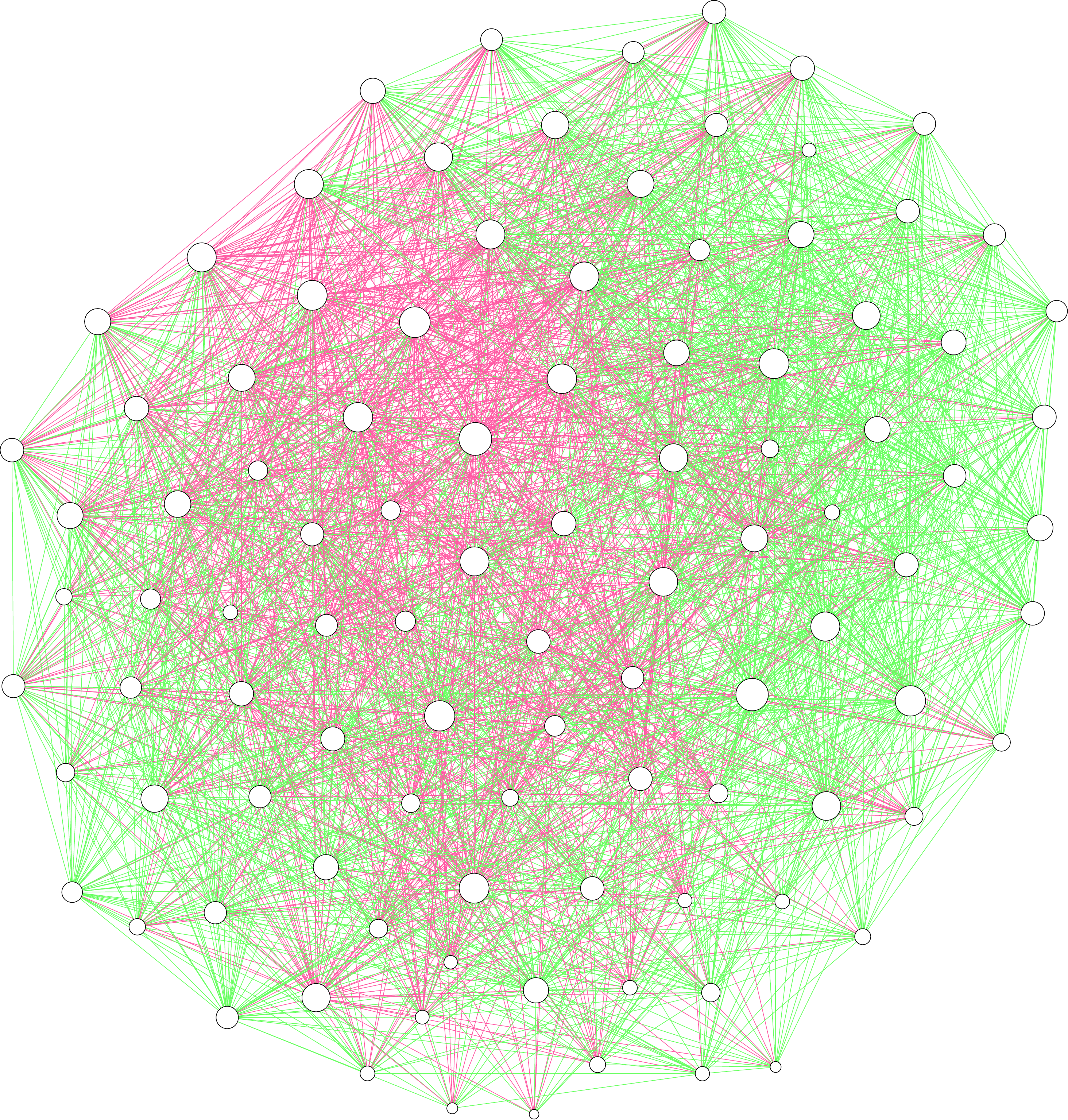}%
		\label{fig_first_case}}
	
	~
	
	\subfloat[The gene regulatory network (G6) of \textit{Saccharomyces cerevisiae} \cite{Costanzo2001yeast}]{\includegraphics[width=2.5in]{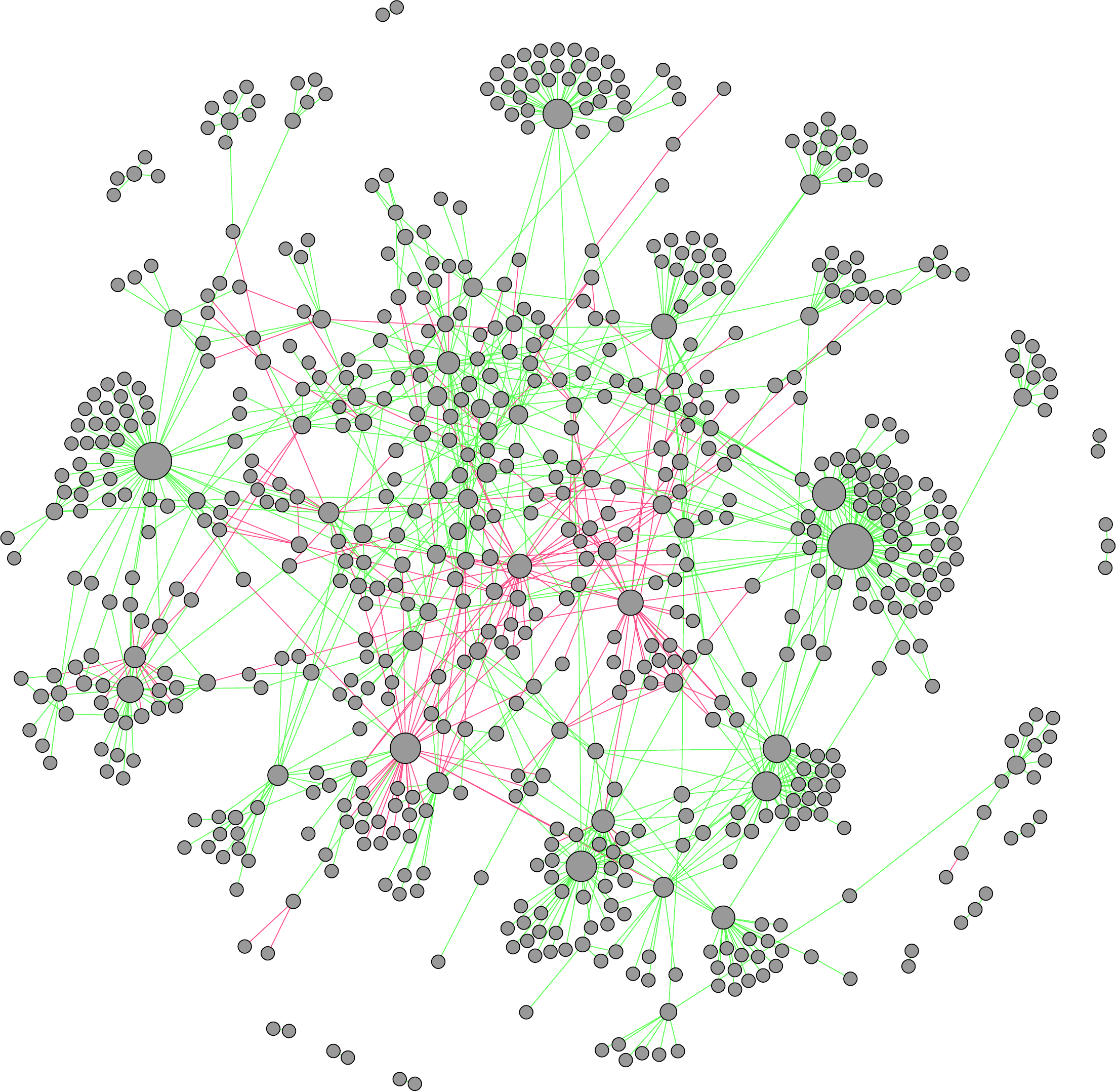}%
		\label{fig_second_case}}
	\hfil
	\subfloat[The gene regulatory network (G7) of the \textit{Escherichia coli} \cite{salgado2006ecoli}]{\includegraphics[width=2.5in]{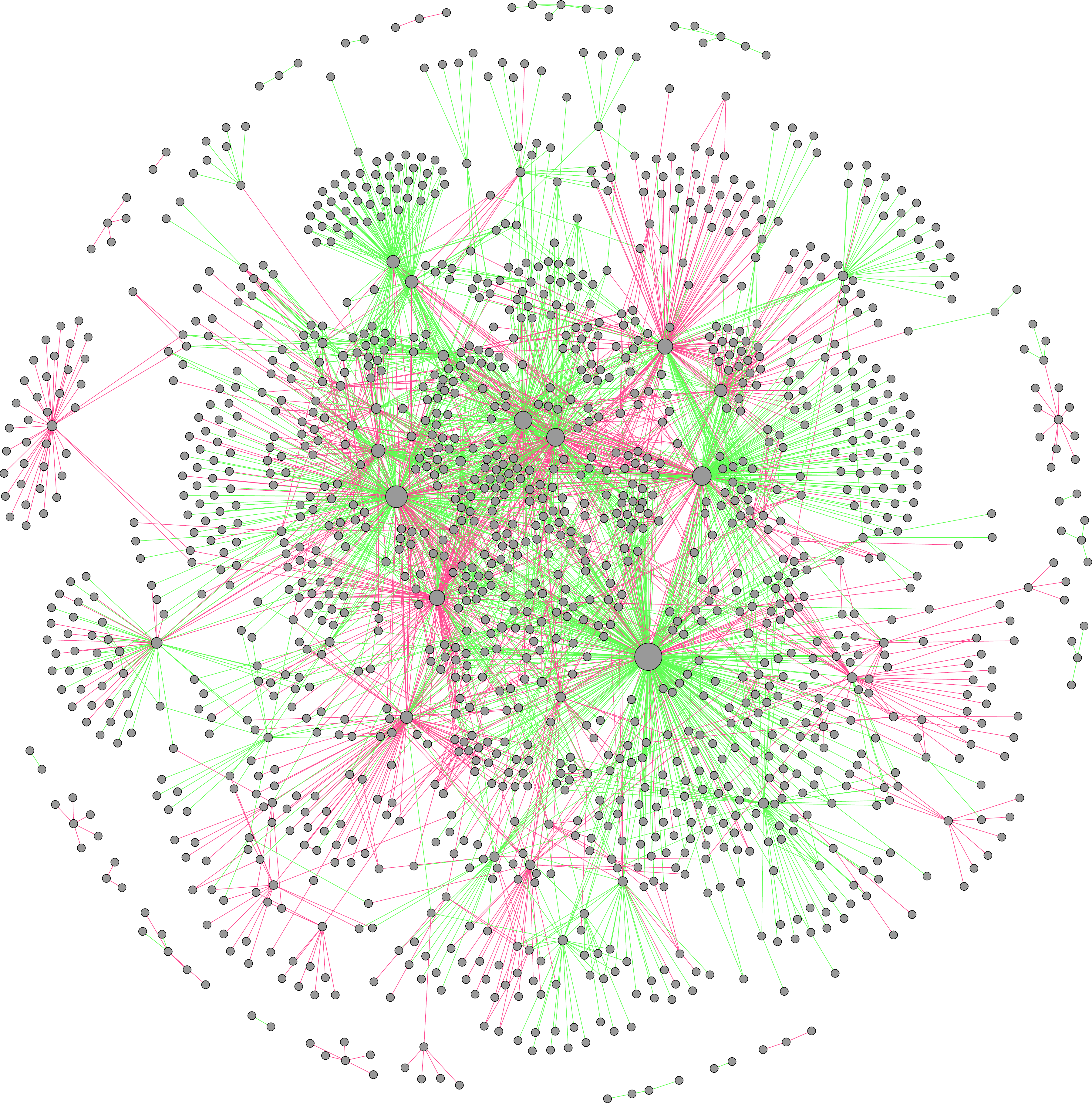}%
		\label{fig_third_case}}
	
	\caption{Three larger signed datasets illustrated as signed graphs in which red lines represent negative edges and green lines represent positive edges (color version online)}

	\label{fig6.5}
\end{figure}
As Figure~\ref{fig6.5} shows, graphs G6 and G7 have more than one connected component. Besides the giant component, there are a number of small components that we discard in order to use $A(G)$ and $\lambda(G)$. Note that, this procedure does not change $T(G)$ and $L(G)$ as the small components are all acyclic. The values of $(n, m, m^-)$ for giant components of G6 and G7 are $(664, 1064, 220)$ and $(1376, 3150, 1302)$ respectively.

The results are shown in Table~\ref{tab4}. Although neither of the networks is completely balanced, the small values of $L(G)$ suggest that removal of relatively few edges makes the networks completely balanced. Table~\ref{tab4} also provides a comparison of partial balance between different datasets of similar sizes. In this regard, it is essential to know that the choice of measure can make a substantial difference. For instance among G1--G4, under $T(G)$, G1 and G3 are respectively the most and the least partially balanced networks. However, if we choose $A(G)$ as the measure, G1 and G3 would be the least and the most partially balanced networks respectively. This confirms our previous discussions on how choosing a different measure can substantially change the results and helps to clarify some of the conflicting observations in the literature \cite{facchetti_computing_2011,kunegis_applications_2014} and \cite{estrada_walk-based_2014}, as previously discussed in Section~\ref{s:recom}.

\begin{table}[ht]
	\centering
	\caption{Partial balance computer for signed graphs (G1--7) and reshuffled graphs}
	\label{tab4}
	\begin{tabularx}{1\textwidth}{p{2.5cm}p{0.2cm}p{1.2cm}p{0.4cm}p{0.4cm}p{0.4cm}p{0.4cm}p{0.4cm}}
		\hline
		Graph:$(n, m, m^-)$                    & Density                &                         & $T$   & $A$   & $F$   & $\lambda$ & $L$    \\ \hline
		\multirow{4}{*}{G1: (16, 58, 29)}       & \multirow{4}{*}{0.483} & $\mu(G)$                & 0.87  & 0.88  & 0.76  & 1.04      & 7      \\
		&                        & $\text{mean}(\mu(G_r))$ & 0.50  & 0.76  & 0.49  & 2.08      & 14.65  \\
		&                        & $\text{SD}(\mu(G_r))$   & 0.06  & 0.02  & 0.05  & 0.20      & 1.38   \\
		&                        & Z-score                 & 6.04  & 5.13  & 5.54  & $-5.13$     & $-5.54$  \\ \hline
		\multirow{4}{*}{G2: (18, 49, 12)}       & \multirow{4}{*}{0.320} & $\mu(G)$                & 0.86  & 0.88  & 0.80  & 0.75      & 5      \\
		&                        & $\text{mean}(\mu(G_r))$ & 0.55  & 0.79  & 0.60  & 1.36      & 9.71   \\
		&                        & $\text{SD}(\mu(G_r))$   & 0.09  & 0.03  & 0.05  & 0.18      & 1.17   \\
		&                        & Z-score                 & 3.34  & 3.37  & 4.03  & $-3.37$     & $-4.03$  \\ \hline
		\multirow{4}{*}{G3:(17, 40, 17)}       & \multirow{4}{*}{0.294} & $\mu(G)$                & 0.78  & 0.90  & 0.80  & 0.50      & 4      \\
		&                        & $\text{mean}(\mu(G_r))$ & 0.49  & 0.82  & 0.62  & 0.89      & 7.53   \\
		&                        & $\text{SD}(\mu(G_r))$   & 0.11  & 0.06  & 0.06  & 0.30      & 1.24   \\
		&                        & Z-score                 & 2.64  & 1.32  & 2.85  & $-1.32$     & $-2.85$  \\ \hline
		\multirow{4}{*}{G4: (17, 36, 16)}       & \multirow{4}{*}{0.265} & $\mu(G)$                & 0.79  & 0.88  & 0.67  & 0.71      & 6      \\
		&                        & $\text{mean}(\mu(G_r))$ & 0.49  & 0.87  & 0.64  & 0.79      & 6.48   \\
		&                        & $\text{SD}(\mu(G_r))$   & 0.14  & 0.03  & 0.06  & 0.17      & 1.08   \\
		&                        & Z-score                 & 2.16  & 0.50  & 0.45  & $-0.50$     & $-0.45$  \\ \hline
		\multirow{4}{*}{G5: (100, 2461, 1047)}  & \multirow{4}{*}{0.497} & $\mu(G)$                & 0.86  & 0.87  & 0.73  & 8.92      & 331    \\
		&                        & $\text{mean}(\mu(G_r))$ & 0.50  & 0.75  & 0.22  & 17.46     & 965.6  \\
		&                        & $\text{SD}(\mu(G_r))$   & 0.00  & 0.00  & 0.01  & 0.02      & 9.08   \\
		&                        & Z-score                 & 118.5 & 387.8 & 69.89 & $-387.8$    & $-69.89$ \\ \hline
		\multirow{4}{*}{G6:(690, 1080, 220)}   & \multirow{4}{*}{0.005} & $\mu(G)$                & 0.54  & 1.00  & 0.92  & 0.02      & 41     \\
		&                        & $\text{mean}(\mu(G_r))$ & 0.58  & 1.00  & 0.77  & 0.02      & 124.3  \\
		&                        & $\text{SD}(\mu(G_r))$   & 0.07  & 0.00  & 0.01  & 0.00      & 4.97   \\
		&                        & Z-score                 & $-0.48$ & 8.61  & 16.75 & $-8.61$     & $-16.75$ \\ \hline
		\multirow{4}{*}{G7:(1461, 3215, 1336)} & \multirow{4}{*}{0.003} & $\mu(G)$                & 0.50  & 1.00  & 0.77  & 0.06      & 371    \\
		&                        & $\text{mean}(\mu(G_r))$ & 0.50  & 1.00  & 0.59  & 0.06      & 653.4  \\
		&                        & $\text{SD}(\mu(G_r))$   & 0.02  & 0.00  & 0.00  & 0.00      & 7.71   \\
		&                        & Z-score                 & $-0.33$ & 3.11  & 36.64 & $-3.11$     & $-36.64$ \\ \hline
				
	\end{tabularx}
\end{table}

In Table~\ref{tab4}, the mean and standard deviation of measures for the reshuffled graphs $(G_r)$, denoted by $\text{mean}(\mu(G_r))$ and $\text{SD}(\mu(G_r))$, are also provided for comparison. We implement a very basic statistical analysis as in \cite{Szell_multi, Szell_measure} using $\text{mean}(\mu(G_r))$ and $\text{SD}(\mu(G_r))$ of 500 reshuffled graphs. Reshuffling the signs on the edges 500 times, we obtain two parameters of balance distribution for the fixed underlying structure. For measures of balance, Z-scores are calculated based on Equation \eqref{eq26}.
\begin{align}
\label{eq26}
Z=\frac{\mu(G)-\text{mean}(\mu(G_r))}{\text{SD}(\mu(G_r))}
\end{align}

The Z-score shows how far the balance is with regards to balance distribution of the underlying structure. Positive values of Z-score for $T(G)$, $A(G)$, and $F(G)$ can be interpreted as existence of more partial balance than the average random level of balance. 

It is worth pointing out that the statistical analysis we have implemented is independent of the normalization method used in $A(G)$ and $F(G)$. The two right columns of \ref{tab4} provide $\lambda(G)$ and $L(G)$ alongside their associated Z-scores. 

The Z-scores show that as measured by the frustration index and algebraic conflict, signed networks G1--G7 exhibit a level of partial (but not total) balance beyond what is expected by chance. Based on these two measures, the level of partial balance is high for graphs G1, G2, G5, G6, and G7 while the numerical results for G3 and G4 do not allow a conclusive interpretation. It indicates that most of the real signed networks investigated are relatively consistent with the theory of structural balance. However, the Z-scores obtained based on the triangle index for G6--G7 show totally different results. Note that G6 and G7 are relatively sparse graphs which only have 70 and 1052 triangles. This may explain the difference between Z-scores of $T(G)$ and that of other measures. The numerical results using the algebraic conflict and frustration index support previous observations of real-world networks' closeness to balance \cite{facchetti_computing_2011,kunegis_applications_2014}.

\section{Conclusion and Future research} \label{s:conclu}

In this study, we started by discussing balance in signed networks in Sections~\ref{s:problem} and \ref{s:check} and introduced the notion of partial balance. We discussed different ways to measure partial balance in Section~\ref{s:measure} and provided some observations on synthetic data in Sections~\ref{s:basic} and \ref{s:special}. After gaining an understanding of the behaviour of different measures, basic axioms and desirable properties were used in Section~\ref{s:axiom} to rule out the measures that cannot be justified.

We have discussed various methodologies and how they have led to conflicting observations in the literature in Section~\ref{s:recom}. Taking axiomatic properties of the measures into account, using the common cycle-based measures denoted by $D(G)$ and $C(G)$ and the walk-based measure $W(G)$ is not recommended. $D_k(G)$ and $A(G)$ may introduce some problems, but overall using them seems to be more appropriate compared to $D(G)$, $C(G)$ and $W(G)$. The observations on synthetic data taken together with the axiomatic properties, recommend $F(G)$ as the best overall measure of partial balance. However, considering the difficulty of computing the exact value of $L(G)$ for very large graphs, one may approximate it using a nonzero optimality gap tolerance with the optimization models in \cite{aref2017computing, aref2016exact}. Alternatively, $A(G)$ and $D_k(G)$ seem to be the other options accepting their potential shortcomings.

Using the three measures $F(G)$, $T(G)$, and $A(G)$, each representing a family of measures, we compared balance in real signed graphs and analogous reshuffled graphs having the same structure in Section~\ref{s:real}. Table~\ref{tab4} provides this comparison showing that different results are obtained under different measures. 

Returning to the questions posed at the beginning of this paper, it is now possible to state that under the frustration index and algebraic conflict many signed networks exhibit a level of partial (but not total) balance beyond that expected by chance. However, the numerical results in Table~\ref{tab4} show that the level of balance observed using the triangle index can be totally different. One of the more significant findings to emerge from this study is that methods suggested for measuring balance may have different context and may require some justification before being interpreted based on their values. The present study confirms that some measures of partial balance cannot be taken as a reliable static measure to be used for analyzing network dynamics. 

Although a numerical part of the current study is based on signed networks with less than a few thousand nodes, the analytical findings that were not restricted to a particular size suggest the inefficacy of some methods for analyzing larger networks as well.

One gap in this study is that we avoid using structural balance theory for analyzing directed networks, making directed signed networks like Epinions, Slashdot, and Wikipedia Elections \cite{leskovec_signed_2010,estrada_walk-based_2014,Giscard2016} datasets untested by our approach. However, see our discussion in Section~\ref{s:recom}.

The findings of this study have a number of important implications for future investigation. Although this study focuses on partial balance, the findings may well have a bearing on link prediction and clustering in signed networks \cite{Gallier16}. Some other theoretical topics of interest in signed networks are network dynamics \cite{tan_evolutionary_2016} and opinion dynamics \cite{li_voter_2015}. Effective methods of signed network structural analysis can contribute to these topics as well. From a practical viewpoint, international relations is a crucial area to implement signed network structural analysis. Having an efficient measure of partial balance in hand, international relations can be investigated in terms of evaluation of partial balance over time for networks of states. 

\section*{Acknowledgement}
We are grateful for the valuable comments from the anonymous reviewers that have improved this paper.

\appendix
\section{Details of calculations} \label{s:calc}
In order to simplify the sum $E(D_k(G)) = \sum \limits_{i \: \text{even}} {\binom{k}{i}} q^i (1-q)^{k-i}$, one may add the two following equations and divide the result by 2:
\begin{align}
	\sum \limits_{i} {\binom{k}{i}} q^i (1-q)^{k-i} &= (q+(1-q))^k \\
	\sum \limits_{i} {\binom{k}{i}} (-q)^i (1-q)^{k-i} &= (-q+(1-q))^k
\end{align}

In ${K}_n^a$, a $k$-cycle is specified by choosing $k$ vertices in some order, then correcting for the overcounting by dividing by $2$ (the possible directions) and $k$ (the number of starting points, namely the length of the cycle). If the unique negative edge is required to belong to the cycle, by orienting this in a fixed way we need choose only $k-2$ further elements in order, and no overcounting occurs. The numbers of negative cycles and total cycles are as follows.

\begin{align}
\sum_{k=3}^n O_k^- 
=\sum_{k=3}^n\frac{(n-2)!}{(n-k)!}
,\quad
\sum_{k=3}^n O_k 
=\sum_{k=3}^n \frac{n!}{2k(n-k)!}
\end{align}

Asymptotic approximations for these sums can be obtained by introducing the exponential generating function. For example, letting 
$$
a_n = \sum_{1\leq k\leq n} \frac{n!}{(n-k)!k}
$$
we have
$$
\sum_{n\geq 0} \frac{a_n}{n!}x^n = \sum_{n,k} \sum_{k\leq n} \frac{n!}{(n-k)!k} x^n = \sum_{k\geq 1} \frac{1}{k} \sum_{n\geq k} \frac{1}{(n-k)!} x^n = 
\sum_{k\geq 1} \frac{x^k}{k} \sum_{m\geq 0} \frac{1}{m!}x^m = e^x \log\left(\frac{1}{1-x}\right).
$$
Similarly we obtain
$$
\sum_{n\geq 0} \sum_{k\geq 0} \frac{1}{(n-k)!} x^n = \frac{e^x}{1-x}.
$$
Standard singularity analysis methods \cite{flajolet2009analytic} show the denominator of the expression for $D(K^a_n)$ to be asymptotic to $(n-1)!e/2$ while the number of negative cycles is asymptotic to $(n-2)!e$. Similarly the weighted sum defining $C(K^a_n)$, where we choose $f(k) = 1/k!$, can be expressed using the ordinary generating function, which for the denominator turns out to be 
$$\frac{1}{1-x} \log\left( \frac{1-2x}{1-x}\right).$$ Again, singularity analysis techniques yield an approximation $2^n/n$. The numerator is easier, and asymptotic to $2^n/(n^2-n)$. This yields the result.

The unsigned adjacency matrix $|\textbf{A}|$ of the complete graph has the form $\textbf{E} - \textbf{I}$ where $\textbf{E}$ is the matrix of all $1$'s. The latter matrix has rank 1 and nonzero eigenvalue $n$. Thus $|\textbf{A}| _{({K}_n^a)}$ has eigenvalues $n-1$ (with multiplicity 1) and $-1$ (with multiplicity $n-1$). The matrix $\textbf{A} _{({K}_n^a)}$ has a similar form and we can guess eigenvectors of the form $(-1,1,0, \dots, 0)$ and $(a,a,1,1, \dots , 1)$. Then $a$ satisfies a quadratic $2a^2 + (n-3)a - (n-2) = 0$. Solving for $a$ and the corresponding eigenvalues, we obtain eigenvalues $(n-4 \pm \sqrt{(n-2)(n+6)})/2, 1, -1$ (with multiplicity $n-3$)).

This yields 
$$K{({K}_n^a)} = \frac{(n-3)e^{-1} + e + e^{\frac{n-4 - \sqrt{(n-2)(n+6)}}{2}} + e^{\frac{n-4 + \sqrt{(n-2)(n+6)}}{2}}}{(n-1)e^{-1} + e^{n-1}}$$ which results in $W({K}_n^a) \sim \frac{1+e^{-4/n}}{2}$. 

Furthermore, since every node of $K_n$ has degree $n-1$, the eigenvalues of $\textbf{L}:=(n-1)\textbf{I}-\textbf{A}$ are precisely of the form $n-1 - \lambda$ where $\lambda$ is an eigenvalue of $\textbf{A}$.

Measures of partial balance for ${K}_n^a$ can therefore be expressed by the formulae \eqref{eq15} -- \eqref{eq18}:

\begin{align}\label{eq15}
D({K}_n^a)=1- \frac{\sum_{k=3}^n \frac{(n-2)!}{(n-k)!}}{\sum_{k=3}^n \frac{n!}{2k(n-k)!}} \sim 1 - \frac{2}{n}
\end{align}
\begin{align}\label{eq15.5}
C({K}_n^a)=1- \frac{\sum_{k=3}^n \frac{(n-2)!}{(n-k)!k!}}{\sum_{k=3}^n \frac{n!}{2k(n-k)!k!}} \sim 1 - \frac{1}{n}
\end{align}
\begin{align}\label{eq17}
D_k({K}_n^a)=1-\frac{\frac{(n-2)!}{(n-k)!}}{\frac{n!}{2k(n-k)!}}=1-\frac{2k}{n(n-1)} \sim 1 - \frac{2k}{n^2}
\end{align}
\begin{align}\label{eq16}
W({K}_n^a) \sim \frac{1+e^{-4/n}}{2} \sim 1-\frac{2}{n}
\end{align}
\begin{align}\label{eq17.2}
\lambda({K}_n^a)= n-1 - (n-4 + \sqrt{(n-2)(n+6)})/2 = (n+2 - \sqrt{(n-2)(n+6)})/2
\end{align}
\begin{align}\label{eq17.4}
A({K}_n^a)= 1- \frac{n+2 - \sqrt{(n-2)(n+6)}}{2n-4} \sim 1-\frac{4}{n^2}
\end{align}
\begin{align}\label{eq18}
F({K}_n^a)=1-\frac{2}{n(n-1)/2}=1-\frac{4}{n(n-1)} \sim 1 - \frac{4}{n^2}
\end{align}

In ${K}_n^c$, all cycles of odd length are unbalanced and all cycles of even length are balanced. Therefore:
\begin{align}
\sum_{k=3}^n O_k^+ 
= \sum_{\textnormal{even}}^n \frac{n!}{2k(n-k)!}
\end{align}

It follows that $D_k({K}_n^c)$ equals 0 for odd $k$ and 1 for even $k$. Based on maximality of $\lambda(G)$ in ${K}_n^c$, $A({K}_n^c)=0$.

Using the above generating function techniques we obtain that the numerator of $D(K^c_n)$ is asymptotic to $(n-1)!(e+(-1)^ne^{-1})/4$. The denominator we know from above is asymptotic to $(n-1)!e/2$. This yields $D(K^c_n) \sim 1/2 + (-1)^n e^{-2}$. Note that $e^{-2} \approx 0.135$ and this explains the oscillation in Figure~\ref{fig5}. Similarly we obtain results for $C(K^c_n)$.

$|\textbf{A}|_{({K}_n^c)}$ has eigenvalues $n-1$ (with multiplicity 1) and $-1$ (with multiplicity $n-1$). The matrix $\textbf{A}_{({K}_n^c)}$ has a similar form and the corresponding eigenvalues would be $1-n$ (with multiplicity 1) and $1$ (with multiplicity $n-1$). This yields
$K{({K}_n^c)} = \frac{(n-1)e^{1} + e^{1-n}}{(n-1)e^{-1} + e^{n-1}}$ which results in $W({K}_n^c) \sim \frac{1+e^{2-2n}}{2}$.

Moreover, a closed-form formula for $L({K}_n^c)$ can be expressed based on a maximum cut which gives a function of $n$ equal to an upper bound of the frustration index under a different name in \cite{abelson_symbolic_1958}. Measures of partial balance for ${K}_n^c$ can be expressed via the closed-form formulae as stated in \eqref{eq21} -- \eqref{eq25}:

\begin{align}\label{eq21}
D({K}_n^c)= \frac{\sum_{\textnormal{$k$ even}}^n \frac{n!}{2k(n-k)!}}{\sum_{k=3}^n \frac{n!}{2k(n-k)!}} \sim \frac{1}{2} + (-1)^n e^{-2}
\end{align}
\begin{align}\label{eq22}
C({K}_n^c)= \frac{\sum_{\textnormal{$k$ even}}^n \frac{n!}{2k(n-k)!k!}}{\sum_{k=3}^n \frac{n!}{2k(n-k)!k!}} \sim \frac{1}{2} - \frac{3n\log n}{2^n}
\end{align}
\begin{align} \label{eq22.5}
D_k({K}_n^c)=
\left\{
\begin{array}{ll}
1 & \mbox{if } k \ \text{is even} \\
0 & \mbox{if } k \ \text{is odd}
\end{array}
\right.
\end{align}
\begin{align}\label{eq23}
W({K}_n^c) \sim \frac{1+e^{2-2n}}{2}
\end{align}
\begin{align}\label{eq23.3}
\lambda({K}_n^c)=\lambda_\text{max}=\overline{d}_{\text{max}}-1=n-2
\end{align}
\begin{align} \label{eq24}
L({K}_n^c)=
\left\{
\begin{array}{ll}
({n^2-2n})/{4} & \mbox{if } n  \ \text{is even} \\
({n^2-2n+1})/{4} & \mbox{if } n  \ \text{is odd}
\end{array}
\right.
\end{align}
\begin{align} \label{eq25}
F({K}_n^c)=
\left\{
\begin{array}{ll}
1-\frac{n(n-2)/4}{n(n-1)/4}=\frac{1}{n-1} & \mbox{if } n  \ \text{is even} \\
1-\frac{(n-1)(n-1)/4}{n(n-1)/4}=\frac{1}{n} & \mbox{if } n  \ \text{is odd}
\end{array}
\right.
\end{align}

Our calculations for $L({K}_n^c)$ show that the upper bound suggested for the frustration index in \cite{iacono_determining_2010} is incorrect.

\section{Counterexamples and proof ideas for the axioms and desirable properties} \label{s:counter}

Axioms: 

Axiom 1 holds in all the measures introduced due to the systematic normalization implemented.

$D_k(G)$, $A(G)$, and $Y(G)$ do not satisfy Axiom 2. All $k$-cycles being balanced, $D_k(G)$ fails to detect the imbalance in graphs with unbalanced cycles of different length. $A(G \oplus C^+)=1$ for unbalanced graphs which makes $A(G)$ fail Axiom 2. $Y(G)$ fails on detecting balance in completely bi-polar signed graphs that are indeed balanced.

As long as $\mu(G\oplus H)$ can be written in the form of $(a+c)/(b+d)$ where $\mu(G)=a/b$ and $\mu(H)=c/d$, $\mu$ satisfies Axiom 3. So all the measures considered satisfy Axiom 3, except for $A(G)$. In case of $\lambda(G) < \lambda(H)$ and $\lambda_\text{max}(G) < \lambda_\text{max}(H)$, $A(G \oplus H)= 1 - \frac{\lambda(G)}{\lambda_\text{max}(H)} > A(H)$ which shows that $A(G)$ fails Axiom 3.

The sign of cycles (closed walks), the Laplacian eigenvalues \cite{Belardo2016}, and the frustration index \cite{zaslavsky_matrices_2010} will not change by applying the switching function introduced in \eqref{eq1.2}. Therefore, Axiom 4 holds for all the measures discussed except for $X(G)$ and $Y(G)$ because they depend on $m^-$, which changes in switching.\\

\noindent
Desirable properties:

Clearly in B1, $C^+_3$ contributes positively to $D(G)$ and $C(G)$, whereas for $D_k(G)$ it depends on $k$ which makes it fail B1. As $W(C^+_3)$ equals 1, $\Tr(e^{\textbf{A}})/\Tr(e^{|\textbf{A}|})$ would be added by equal terms in both the numerator and denominator leading to $W(G)$ satisfying B1. $A(G)$ satisfies B1 because $A(G\oplus C^+_3)=1$. As $m$ increases by 3, $F(G)$ satisfies B1. The dependency of $X(G)$ and $Y(G)$ on $m^-$ and incapability of the binary measure, $Z(G)$, in providing values between 0 and 1 make them fail B1.

$C^-_3$ adds only to the denominators of $D(G)$ and $C(G)$, whereas for $D_k(G)$ it depends on $k$ which makes it fail B2. Following the addition of a negative 3-cycle, $W(G)$ is observed to increase resulting in its failure in B2 (for example, take $G=K_5$ with single negative edge, and $C^-_3$ having a single negative edge). As $A(G)\neq0$ and $A(C^-_3)=0$, the value of $A(G \oplus C^-_3)$ does not change when a negative 3-cycle is added. Therefore, it fails B2.	
Moreover, $F(G)$ fails B2 whenever $L(G) \geq m/3$ as observed in a family of graphs in Subsection \ref{ss:knc}. However, $F^\prime (G)$ introduced in Eq.\ \ref{eq5.8} which only differs in normalization, satisfies this desirable property. The measures $X(G)$ and $Y(G)$ fail B2, but the binary measure, $Z(G)$, satisfies it.
	
All the cycle-based measures, namely $D(G),C(G)$, and $D_3(G)$ fail B3 (for example, take $G=K_4$ with two symmetrically located negative edges). $W(G)$ is also observed to fail B3 (for instance, take $G$ as the disjoint union of a 3-cycle and a 5-cycle each having 1 negative edge). It is known that $\lambda(G \ominus e) \leq \lambda(G)$ \cite{Belardo2016}. However in some cases where $\lambda_\text{max}(G \ominus e) < \lambda_\text{max} (G)$ counterexamples are found showing $A(G)$ fails on B3 (consider a graph with $n=8, m^+=10, m^-=3, \min{|E^*|}=3$ in which $\lambda_\text{max} (G)=6$ and $\lambda_\text{max}(G \ominus e)=3$). $Y(G)$ and $Z(G)$ fail B3. Moreover, $F(G)$ satisfies B3 because $L(G \ominus e) = L(G) -1$.

The cycle-based measures and $W(G)$ do not satisfy B4. For $D_3(G)$, we tested a graph with $n=7, m=15, |E^*|=3$ and we observed $D_3(G \oplus e) > D_3(G)$. According to Belardo and Zhou, $\lambda(G \oplus e) \geq \lambda(G)$ \cite{Belardo2016}. However in some cases where $\lambda_\text{max}(G \oplus e) > \lambda_\text{max} (G)$ counterexamples are found showing $A(G)$ fails on B4. counterexamples showing $D(G),C(G),W(G)$, and $A(G)$ fail B4, are similar to that of B3. Moreover, $F(G)$ satisfies B4 as do $X(G)$ and $Z(G)$, while $Y(G)$ fails B4 when $e$ is positive.

\section{Inferring undirected signed graphs} \label{s:infer}
Sampson collected different sociometric rankings from a group of 18 monks at different times \cite{sampson_novitiate_1968}. The data provided includes rankings on like, dislike, esteem, disesteem, positive influence, negative influence, praise, and blame. We have considered all positive and negative rankings. Then only the reciprocated relations with similar signs are considered to infer an undirected signed edge between two monks (see \cite{doreian_partitioning_2009} and how the authors inferred a directed signed graph in their Table 5 by summing the influence, esteem and respect relations).

Newcomb reported rankings made by 17 men living in a pseudo-dormitory \cite{newcomb_acquaintance_1961}. We used the ranking data of the last week which includes complete ranks from 1 to 17 gathered from each man. As the gathered data is related to complete ranking, we considered ranks 1-5 as one-directional positive relations and 12-17 as one-directional negative relations. Then only the reciprocated relations with similar signs are considered to infer an undirected signed edge between two men (see \cite{doreian_partitioning_2009} and how the authors converted the top three and bottom three ranks to a directed signed edges in their Fig. 4.).

Lemann and Solomon collected ranking data based on multiple criteria from female students living in off-campus dormitories \cite{lemann_group_1952}. We used the data for house B which is resulted by integrating top and bottom three rankings for multiple criteria. As the gathered data itself is related to top and bottom rankings, we considered all the ranks as one-directional signed relations. Then only the reciprocated relations with similar signs are considered to infer an undirected signed edge between two women (see \cite{doreian_multiple_2008} and how the author inferred a directed signed graph in their Fig. 5 from the data for house B.).



\bibliographystyle{plain}
\bibliography{refs}

\end{document}